\tikzstyle{vertex}=[circle, draw, inner sep=0pt, minimum size=6pt]
\newcolumntype{$}{>{\global\let\currentrowstyle\relax}}
\newcolumntype{^}{>{\currentrowstyle}}
\newcommand{\myN}{\mathcal{F}^{old}}
\newcommand{\myL}{\mathcal{F}^{new}_1}
\newcommand{\myM}{\mathcal{F}^{new}_0}
\newcommand{\mySa}{S^{new}_1}
\newcommand{\mySb}{S^{new}_0}
\def\thm@space@setup{%
  \thm@preskip=0.5cm
}
\newtheorem{thm}{Theorem}
\newtheorem{lem}[thm]{Lemma}
\newtheorem{inclm}{Claim}[thm]
\newenvironment{inproof}[1][\proofname]{\par
  \pushQED{\qed}%
  \normalfont \partopsep=\z@skip \topsep=\z@skip
  \trivlist
  \item[\hskip\labelsep
        \itshape
    #1\@addpunct{.}]\ignorespaces
}{%
  \popQED\endtrivlist\@endpefalse
}
\title{A $\frac{17}{12}$-approximation algorithm for 2-vertex-connected spanning subgraphs on graphs with minimum degree at least 3}
\author{Vishnu V. Narayan}
\date{January 17 2017\footnote{The research was completed on July 31, 2016; this draft was delayed due to other commitments.}}
\begin{document}

\maketitle

We obtain a polynomial-time $\frac{17}{12}$-approximation algorithm for the minimum-cost 2-vertex-connected spanning subgraph problem, restricted to graphs of minimum degree at least 3. Our algorithm uses the framework of ear-decompositions for approximating connectivity problems, which was previously used in algorithms for finding the smallest 2-edge-connected spanning subgraph by Cheriyan, Seb\H{o} and Szigeti (SIAM J.Discrete Math. 2001) who gave a $\frac{17}{12}$-approximation algorithm for this problem, and by Seb\H{o} and Vygen (Combinatorica 2014), who improved the approximation ratio to $\frac{4}{3}$.

\section*{Introduction}

A graph is 2-vertex-connected if the deletion of any vertex, along with its incident edges, does not disconnect the remainder of the graph. The problem of finding a smallest 2-vertex-connected spanning subgraph of a given graph is NP-hard. This can be seen via the following reduction from the Hamiltonian cycle problem: A graph $G$ has a Hamiltonian cycle if and only if the number of edges in the smallest 2-vertex-connected spanning subgraph is equal to the number of vertices in the graph.

Khuller and Vishkin gave a $\frac{5}{3}$-approximation algorithm for 2-vertex-connectivity in \cite{khullervishkin}. This was improved by Garg, Singla and Vempala, who obtained an approximation ratio of $\frac{3}{2}$ in \cite{gargsinglavempala}. Better approximation ratios have been claimed in the past, but to the best of our knowledge, no complete proof had been published for these. Recently, Heeger and Vygen gave a $\frac{10}{7}$-approximation algorithm for this problem. Our research was carried out independently in the same period.

We present a $\frac{17}{12}$-approximation algorithm for the 2-vertex-connectivity problem restricted to graphs with minimum degree at least 3. Appendix A contains a proof that this restricted version of the problem is also NP-hard. Our algorithm uses the framework of ear-decompositions for approximating connectivity problems, which was previously used in algorithms for finding the smallest 2-edge-connected spanning subgraphs by Cheriyan, Seb\H{o} and Szigeti in \cite{cheriyanseboszigeti} who gave a $\frac{17}{12}$-approximation algorithm for this problem, and by Seb\H{o} and Vygen in \cite{sebovygen}, who improved the approximation ratio to $\frac{4}{3}$.

\section*{Preliminaries}

Let $G = (V,E)$ be an undirected graph. An \textit{ear} of $G$ is a path $P$ of length at least 1, such that the endpoints of $P$ may coincide, but every other pair of vertices of $P$ are distinct. An ear $P$ is \textit{open} if its endpoints are distinct and \textit{closed} otherwise. $P$ is \textit{trivial} if it has a single edge, \textit{short} if it has 2 or 3 edges, and \textit{long} otherwise. $P$ is \textit{even} if it has an even number of edges, and \textit{odd} otherwise. The vertices of $P$ that are not endpoints of $P$ are called internal vertices of $P$, their set is denoted by $in(P)$.

An \textit{ear-decomposition} of $G$ is a sequence $P_0$,$P_1$,\ldots,$P_k$, where $P_0$ is a vertex and $P_1$,\ldots,$P_k$ are ears such that $P_i$ shares exactly its two endpoints with the vertices of $P_0 \cup\ldots\cup P_{i-1}$. We denote by $\phi(G)$ the minimum number of even ears of an ear-decomposition $D$, over all the ear-decompositions $D$ of $G$. An ear-decomposition is \textit{evenmin} if the number of even ears is equal to $\phi(G)$. For any ear $P$, let $\phi(P) = 1$ if $P$ is even and $\phi(P) = 0$ otherwise. A nontrivial ear $P$ is a \textit{pendant} ear if no other nontrivial ear has an endpoint in $in(P)$, otherwise it is non-pendant.

We refer the reader to \cite{sebovygen} for definitions and a detailed discussion of \textit{nice} ear-decompositions, \textit{eardrums}, \textit{earmuffs} and \textit{maximum earmuffs}. We also use the lower bounds $LP(G)$ and $L_\mu(G,M)$ and their related theorems, as defined in Section 4 of this paper. We denote by $OPT_{2VC}(G)$ the cost of the minimum-cost 2-vertex-connected spanning subgraph of $G$.

At times, we abuse the notation for trivial ears, and write $uv$ for the ear corresponding to the path containing the vertices $u$, $v$ and the edge $uv$.

\section*{Algorithm}

Our algorithm consists of a few steps, summarized as follows.
\begin{enumerate}
    \item Construct an open evenmin ear-decomposition $D$ of $G$.
    \item Modify $D$ to get an open evenmin ear-decomposition with the property that all of its short ears are pendant ears.
    \item Modify $D$ to get an open evenmin ear-decomposition that is nice.
    \item Delete all edges in trivial ears. The resulting graph is a 2-vertex-connected spanning subgraph of $G$ with at most $\frac{17}{12}OPT_{2VC}(G)$ edges.
\end{enumerate}

Our analysis is detailed in Theorems \ref{pendanttheorem}, \ref{opennicetheorem} and \ref{algorithmtheorem} below.

The following Lemma (Lemma \ref{firstearlemma}) allows us to replace a given ear-decomposition on a graph with another ear-decomposition on the same graph. It is used in the proof of Theorem \ref{pendanttheorem}. The Lemma is simple to prove, and its proof is left to the reader. The reader may find Figure \ref{lemma1figure} useful for understanding the statement of the Lemma.

\begin{lem} \label{firstearlemma}
    Let $D$ be an ear-decomposition of a graph $G$. Suppose $P$ and $Q$ are nontrivial ears of $D$ such that $Q$ is the first nontrivial ear of $D$ with an endpoint in an internal vertex of $P$. Further, suppose that only one of the endpoints of $Q$ is an internal vertex of $P$, and that this endpoint is adjacent, by an edge of $P$, to an endpoint of $P$. Let $x$ and $y$ be the end vertices of $P$ and $w$ and $z$ be the end vertices of $Q$, such that $w$ is the internal vertex of $P$ adjacent to $y$.
    
    Let $P'$ be the ear with endpoints $x$ and $z$ and consisting of all edges of $P$ and $Q$ except $wy$. Let $D'$ be the ear-decomposition constructed from $D$ by deleting the ears $P$ and $Q$, adding the ear $P'$ in the position of $Q$, and adding the trivial ear $wy$ at the end of the ear-decomposition. Then $D'$ is a valid ear-decomposition of $G$.
\end{lem}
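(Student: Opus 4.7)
My plan is to check $D'$ directly against the two defining properties of an ear-decomposition of $G$: that the edges of its ears together equal $E(G)$, and that each successive ear shares with the union of the earlier ears exactly its two endpoints. The edge-preservation is immediate: $P'$ consists of all edges of $P$ except $wy$ together with all edges of $Q$, and the appended trivial ear $wy$ restores the missing edge, so $E(P') \cup \{wy\} = E(P) \cup E(Q)$ and the total edge set is unchanged. I would also observe that $P'$ is genuinely an ear: the $x$-to-$w$ portion of $P$ and the ear $Q$ meet only at $w$ (since the internal vertices of $Q$ are fresh in $D$, hence disjoint from $V(P)$), so $P'$ is a path from $x$ to $z$ whose vertices are all distinct except possibly that the endpoints coincide, which is permitted.

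Next, I would verify the sharing property by walking through $D'$ in order. The ears of $D$ preceding the position of $P$ are unchanged and remain valid. For the ears lying strictly between $P$ and $Q$ in $D$: by the hypothesis that $Q$ is the first nontrivial ear using $in(P)$, the nontrivial ones among these have endpoints in $V(P_0 \cup \ldots \cup P_{i-1}) \cup \{x,y\}$, which is already available in $D'$; any trivial ears between $P$ and $Q$ whose endpoints happen to lie in $in(P)$ can be harmlessly reordered to appear just after $P'$, a move that changes neither the edge set nor the final vertex set. When $P'$ is placed in the position of $Q$, both its endpoints are present: $x \in V(P_0 \cup \ldots \cup P_{i-1})$ as an endpoint of $P$, and $z$ was already an endpoint of $Q$ in $D$ and lies outside $in(P)$ by hypothesis. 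The internal vertices of $P'$ are exactly $in(P) \cup \{w\} \cup in(Q)$, which are precisely the vertices that were freshly introduced when $P$ and $Q$ were added in $D$; hence $P'$ shares only its endpoints with the preceding part of $D'$. The ears after $Q$ remain valid in $D'$ because, once $P'$ is placed, the available vertex set matches the one after $Q$ in $D$. Finally, the appended trivial ear $wy$ uses $w$ (now internal to $P'$) and $y$ (present from before $P$).

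The main obstacle I anticipate is the small amount of bookkeeping around any trivial ears sitting between $P$ and $Q$ that reference $in(P)$, together with the corner case $z = x$ in which $P'$ becomes a closed ear. Both are handled by the reordering described above and by appealing directly to the definition of a closed ear, so the verification stays routine, which matches the author's remark that the lemma is simple to prove.
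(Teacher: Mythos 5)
The paper gives no proof of this lemma---it is explicitly ``left to the reader''---so there is no argument of the author's to compare against, only the definition of an ear-decomposition itself. Your verification is correct and complete: the edge set is preserved since $E(P')\cup\{wy\}=E(P)\cup E(Q)$; $P'$ is a genuine ear because $in(Q)$ is disjoint from $V(P)$ and $z\notin in(P)$ by hypothesis, so the $x$--$w$ portion of $P$ and $Q$ meet only at $w$; both endpoints $x$ and $z$ of $P'$ are already present at the position of $Q$ in $D'$; and the internal vertices of $P'$ are exactly $in(P)\cup in(Q)$, precisely the vertices whose introduction is deferred by deleting $P$, so $P'$ shares only its endpoints with what precedes it and everything after $Q$ sees the same vertex set as before. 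The one wrinkle you flag is real: a trivial ear lying between $P$ and $Q$ with an endpoint in $in(P)$ would be orphaned by the construction read literally, and your fix---defer such trivial ears until after $P'$, which is always harmless since a trivial ear introduces no vertices---is the standard convention and clearly what the author intends. Your handling of the corner case $z=x$ (so that $P'$ is closed) is also right, since the lemma claims only validity, not openness. In short, this is a correct proof of the omitted argument, and if anything slightly more careful than the paper's treatment.
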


\begin{figure}[ht]
    \begin{subfigure}{0.47\textwidth}
        \centering
        \begin{tikzpicture}[scale=0.5]
        \begin{scope}[every node/.style={circle, fill=black, draw, inner sep=0pt,
        minimum size = 0.2cm
        }]
            \node[label={[label distance=5]225:x}] (x) at (0,0) {};
            \node (v) at (0.5,2) {};
            \node[label={[label distance=5]105:w}] (w) at (2,2) {};
            \node[label={[label distance=5]315:y}] (y) at (2.5,0) {};
            \node (n1) at (3,3) {};
            \node (n2) at (5,3) {};
            \node[label={[label distance=5]45:z}] (z) at (6,2) {};
            
        \end{scope}
        \begin{scope}[every edge/.style={draw=black}]
        \path[very thick] (x) edge node {} (v);
        \path[very thick, dotted] (v) edge[bend left=40] node {} (w);
        \path[very thick] (w) edge node {} (y);
        \path[very thick] (w) edge node {} (n1);
        \path[very thick, dotted] (n1) edge[bend left=40] node {} (n2);
        \path[very thick] (n2) edge node {} (z);
        \end{scope}
        \begin{scope}[every node/.style={draw=none,rectangle}]
        \node (p) at (1.25,0) {$P$};
        \node (q) at (4,2) {$Q$};
        \end{scope}
        \end{tikzpicture}
    \caption{}
    \label{firstlemmaa}
    \end{subfigure}
    \hspace*{\fill}
    \begin{subfigure}{0.47\textwidth}
        \centering
        \begin{tikzpicture}[scale=0.5]
        \begin{scope}[every node/.style={circle, fill=black, draw, inner sep=0pt,
        minimum size = 0.2cm
        }]
            \node[label={[label distance=5]225:x}] (x) at (0,0) {};
            \node (v) at (0.5,2) {};
            \node[label={[label distance=5]105:w}] (w) at (2,2) {};
            \node[label={[label distance=5]315:y}] (y) at (2.5,0) {};
            \node (n1) at (3,3) {};
            \node (n2) at (5,3) {};
            \node[label={[label distance=5]45:z}] (z) at (6,2) {};
            
        \end{scope}
        \begin{scope}[every edge/.style={draw=black}]
        \path[very thick, dashed] (x) edge node {} (v);
        \path[very thick, dashed] (v) edge[bend left=40] node {} (w);
        \path[very thick, dotted] (w) edge node {} (y);
        \path[very thick, dashed] (w) edge node {} (n1);
        \path[very thick, dashed] (n1) edge[bend left=40] node {} (n2);
        \path[very thick, dashed] (n2) edge node {} (z);
        \end{scope}
        \begin{scope}[every node/.style={draw=none,rectangle}]
        \node (p') at (4,2) {$P'$};
        \end{scope}
        \end{tikzpicture}
    \caption{}
    \label{firstlemmab}
    \end{subfigure}
    \caption{}
    \label{lemma1figure}
\end{figure}

\begin{thm} \label{pendanttheorem}
    Every 2-vertex-connected graph $G$ with minimum degree at least 3 has an open ear-decomposition with $\phi(G)$ even ears in which all short ears are pendant. Such an ear-decomposition can be computed in polynomial time.
\end{thm}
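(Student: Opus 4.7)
The plan is to start from an open evenmin ear-decomposition $D$ of $G$ (computable in polynomial time via standard machinery for 2-vertex-connected graphs) and to iteratively apply Lemma \ref{firstearlemma}, together with a few symmetric variants, to eliminate short non-pendant ears while preserving openness and the count $\phi(G)$ of even ears.

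The basic step is as follows: while $D$ contains a short non-pendant ear $P$, pick $Q$ to be the first nontrivial ear of $D$ with an endpoint in $in(P)$; this $Q$ exists because $P$ is non-pendant. For $|P|=2$ the unique internal vertex is adjacent via $P$ to both endpoints of $P$, and for $|P|=3$ each internal vertex is adjacent via $P$ to exactly one endpoint; hence the hypotheses of Lemma \ref{firstearlemma} hold whenever $Q$ has exactly one endpoint in $in(P)$. Applying the lemma produces a new open ear $P'$ of length $|P|+|Q|-1$ together with a trivial ear.

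Three invariants need verification. Termination: the total length of nontrivial ears drops by exactly one per iteration, so the process halts after at most $|E(G)|$ steps. Parity: when $|P|=3$, the new ear has length $|Q|+2$ with the same parity as $Q$, so the even-ear count is unchanged; when $|P|=2$ the new ear has length $|Q|+1$ with opposite parity from $Q$, so the net change in the even-ear count is $-2\phi(Q)$, which the evenmin hypothesis forces to equal $0$ (i.e.\ $Q$ must already be odd in this case). Openness: the new ear is open whenever its endpoints $x$ and $z$ are distinct.

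The hard part is the treatment of two degenerate configurations. First, when $|P|=3$ and $Q$ has both endpoints in $in(P)$, Lemma \ref{firstearlemma} does not apply directly; my plan is to use a two-for-two variant that extends $P$ along $Q$ between the two internal vertices of $P$ and makes the middle edge of $P$ a trivial ear, the parity count working out identically. Second, when the standard application would produce a closed ear (the endpoint $z$ of $Q$ outside $in(P)$ coincides with the endpoint $x$ of $P$), my plan is to delete the edge $xw$ of $P$ rather than $wy$, so that the new ear runs from $y$ through $w$ and along $Q$ to $x$; this is open and an identical parity analysis applies. The minimum-degree-3 hypothesis is expected to be invoked in these corner cases to ensure that enough structural slack exists in $G$ to carry out the symmetric modification without creating new short non-pendant ears that can no longer be removed or losing openness elsewhere; this, together with bookkeeping the effect of each modification on the pendant status of ears other than $P$ and $Q$, is the technical heart of the argument. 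Polynomial-time execution follows directly from the monotone potential on total nontrivial length.
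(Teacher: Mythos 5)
Your skeleton (start from an open evenmin ear-decomposition, repeatedly absorb a short non-pendant ear $P$ into the first nontrivial ear $Q$ attaching at $in(P)$ via Lemma \ref{firstearlemma}, and use the evenmin hypothesis to force the parity bookkeeping to work out) matches the paper, and your treatment of 2-ears and of the ``generic'' 3-ear case is correct. But the two configurations you defer are not corner cases that yield to symmetry; they are where essentially all the work of the paper's proof lives, and the one concrete fix you do propose is wrong. Consider a 3-ear $P = x\hbox{--}v\hbox{--}y\hbox{--}z$ whose first attaching nontrivial ear $Q$ runs from $y$ back to the far endpoint $x$. Any single path built from the edges of $P$ and $Q$ that contains both internal vertices $v$ and $y$ must traverse $x\hbox{--}v\hbox{--}y$ and then follow $Q$ from $y$ to $x$, hence is closed; deleting the edge of $P$ at the other end instead (your proposed repair) either orphans $v$ --- which must remain an internal vertex of some nontrivial ear for the decomposition to stay valid --- or creates a vertex of degree 3 inside the would-be ear, so it is not a path at all. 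There is no local two-ear surgery that resolves this case.

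This is precisely where the minimum-degree-3 hypothesis enters in the paper: $v$ has a third neighbour $u$, and one must follow the ear $R$ containing $uv$ and then iteratively trace back through the decomposition (the sets $\myN$, $\myL$, $\myM$ and the loop on the vertex $t$ in Case 3c), splitting each encountered ear at $t$ and reassembling one half into a new long ear through $v$ and $y$ while demoting the other half, with a separate parity argument showing that an odd piece in $\myL$ forces the corresponding original ear to be even so that no extra even ears are created, and with four distinct termination cases ($t\notin\{v,y,z\}$, $t=v$, $t=y$, $t=z$) each needing its own construction and its own openness/parity check. You also omit the paper's Case 2 (two ears $Q_1$, $Q_2$ crossing between $\{x,y\}$ and $\{v,z\}$, handled by a three-for-one exchange) and the preliminary step that fixes a short closed ear $P_1$. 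Finally, your termination potential (total length of nontrivial ears drops by one per step) does not survive the hard cases, where the surgery can create \emph{new} non-pendant short ears; the paper instead argues termination via the monotone growth of the set of vertices already placed on long ears. As written, the proposal proves the theorem only for decompositions in which the bad configuration never arises, so there is a genuine gap.
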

\begin{proof}
    Using Proposition 3.2 of Cheriyan, Seb\H{o} and Szigeti \cite{cheriyanseboszigeti}, construct an open ear-decomposition $D = (P_0, P_1, \ldots, P_k)$ of $G$ with $\phi(G)$ even ears.
    
    Suppose the closed ear $P_1$ is short (that is, $P_1$ is a 3-ear). Since every vertex of $G$ has degree at least 3, $G$ has at least 4 vertices, hence $D$ has at least one open ear. Suppose $u$ and $v$ are the end vertices of $P_2$, then there is a $u,v$-path of length 2 in $P_1$. Let $P'$ be the union of the $u,v$-path in $P_1$ and the $u,v$-path in $P_2$. Delete the ears $P_1$ and $P_2$ from $D$, and add the ear $P'$ in the position of $P_1$ in $D$, and the trivial ear $uv$ at the end of $D$. Now the closed ear in $D$ is a long ear, and $D$ is still evenmin. Set $k := k-1$ and relabel the new ears of $D$ such that $D = (P_0, P_1, \ldots, P_k)$.
    
    \begin{figure}[ht]
        \begin{subfigure}{0.47\textwidth}
            \centering
            \begin{tikzpicture}[scale=0.5]
            \begin{scope}[every node/.style={circle, fill=black, draw, inner sep=0pt,
            minimum size = 0.2cm
            }]
                \node[fill=none, thick] (x) at (1,0) {};
                \node[label={[label distance=5]165:u}] (u) at (0,2) {};
                \node[label={[label distance=5]15:v}] (v) at (2,2) {};
            \end{scope}
            \begin{scope}[every edge/.style={draw=black}]
            \path[very thick] (x) edge node {} (u);
            \path[very thick] (u) edge node {} (v);
            \path[very thick] (v) edge node {} (x);
            \path[very thick, dotted] (u) edge[min distance=80, bend left=120] node[label={[label distance=0]90:$P_2$}] {} (v);
            \end{scope}
            \begin{scope}[every node/.style={draw=none,rectangle}]
            \node (pzero) at (2,0) {$P_0$};
            \node (pone) at (2.5,1) {$P_1$};
            \end{scope}
            \end{tikzpicture}
        \caption{}
        \label{cycleeara}
        \end{subfigure}
        \hspace*{\fill}
        \begin{subfigure}{0.47\textwidth}
            \centering
            \begin{tikzpicture}[scale=0.5]
            \begin{scope}[every node/.style={circle, fill=black, draw, inner sep=0pt,
            minimum size = 0.2cm
            }]
                \node[fill=none, thick] (x) at (1,0) {};
                \node[label={[label distance=5]165:u}] (u) at (0,2) {};
                \node[label={[label distance=5]15:v}] (v) at (2,2) {};
            \end{scope}
            \begin{scope}[every edge/.style={draw=black}]
            \path[very thick, dashed] (x) edge node {} (u);
            \path[dashed] (u) edge node {} (v);
            \path[very thick, dashed] (v) edge node {} (x);
            \path[very thick, dashed] (u) edge[min distance=80, bend left=120] node[] {} (v);
            \end{scope}
            \begin{scope}[every node/.style={draw=none,rectangle}]
            \node (pzero) at (2,0) {$P_0$};
            \node (pone) at (3.5,2) {$P_1$};
            \end{scope}
            \end{tikzpicture}
        \caption{}
        \label{cycleearb}
        \end{subfigure}
        \caption{The cycle-ear $P_1$ is a short ear}
    \end{figure}
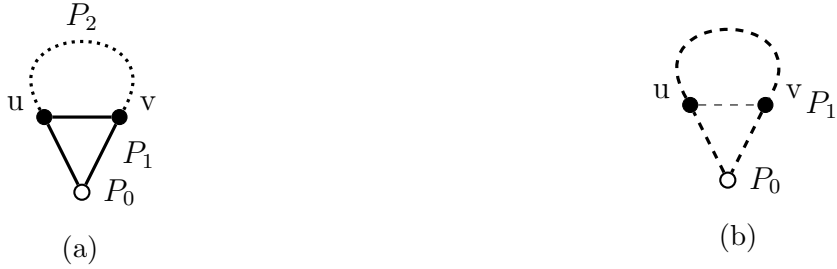
    
    We proceed to make all other short ears pendant, starting with 2-ears. As long as $D$ has a non-pendant 2-ear, we repeat the following procedure. Choose the first non-pendant 2-ear $P$ in $D$. Since $P$ is non-pendant, there exists a nontrivial ear in $D$ with one end incident on the internal vertex $z$ of $P$. Let $Q$ be the first such ear in $D$.
    
    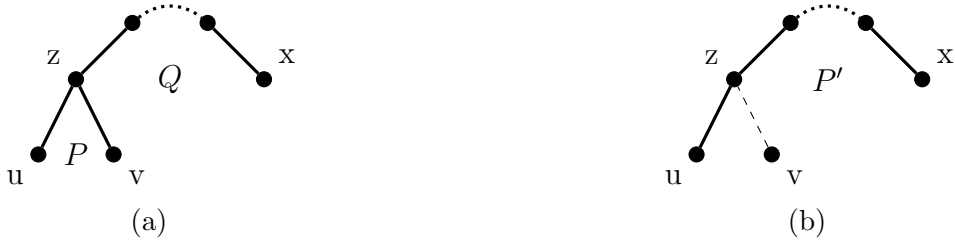
\begin{figure}[ht]
        \begin{subfigure}{0.47\textwidth}
            \centering
            \begin{tikzpicture}[scale=0.5]
            \begin{scope}[every node/.style={circle, fill=black, draw, inner sep=0pt,
            minimum size = 0.2cm
            }]
                \node[label={[label distance=5]225:u}] (u) at (0,0) {};
                \node[label={[label distance=5]315:v}] (v) at (2,0) {};
                \node[label={[label distance=5]135:z}] (z) at (1,2) {};
                \node[label={[label distance=5]45:x}] (x) at (6,2) {};
                \node[] (n1) at (2.5,3.5) {};
                \node[] (n2) at (4.5,3.5) {};
            \end{scope}
            \begin{scope}[every edge/.style={draw=black}]
            \path[very thick] (u) edge node {} (z);
            \path[very thick] (v) edge node {} (z);
            \path[very thick] (z) edge node {} (n1);
            \path[very thick, dotted] (n1) edge[bend left=40] node {} (n2);
            \path[very thick] (x) edge node {} (n2);
            \end{scope}
            \begin{scope}[every node/.style={draw=none,rectangle}]
            \node (p) at (1,0) {$P$};
            \node (q) at (3.5,2) {$Q$};
            \end{scope}
            \end{tikzpicture}
        \caption{}
        \label{twoeara}
        \end{subfigure}
        \hspace*{\fill}
        \begin{subfigure}{0.47\textwidth}
            \centering
            \begin{tikzpicture}[scale=0.5]
            \begin{scope}[every node/.style={circle, fill=black, draw, inner sep=0pt,
            minimum size = 0.2cm
            }]
                \node[label={[label distance=5]225:u}] (u) at (0,0) {};
                \node[label={[label distance=5]315:v}] (v) at (2,0) {};
                \node[label={[label distance=5]135:z}] (z) at (1,2) {};
                \node[label={[label distance=5]45:x}] (x) at (6,2) {};
                \node[] (n1) at (2.5,3.5) {};
                \node[] (n2) at (4.5,3.5) {};
            \end{scope}
            \begin{scope}[every edge/.style={draw=black}]
            \path[very thick] (u) edge node {} (z);
            \path[dashed] (v) edge node {} (z);
            \path[very thick] (z) edge node {} (n1);
            \path[very thick, dotted] (n1) edge[bend left=40] node {} (n2);
            \path[very thick] (x) edge node {} (n2);
            \end{scope}
            \begin{scope}[every node/.style={draw=none,rectangle}]
            \node (p') at (3.5,2) {$P'$};
            \end{scope}
            \end{tikzpicture}
        \caption{}
        \label{twoearb}
        \end{subfigure}
        \caption{$P$ is a 2-ear}
    \end{figure}
    
    Let $u$ and $v$ be the end vertices of $P$ and $x$ and $z$ be the end vertices of $Q$ such that $u \neq x$. Delete ears $P$ and $Q$ from $D$, and construct the ear $P'$ with ends at $u$ and $x$, containing internally the internal vertices of both $P$ and $Q$, as shown by the thick line in Figure \ref{twoearb}. Add the ear $P'$ to $D$ in the position of $Q$. Add the trivial ear $vz$ at the end of $D$. By Lemma \ref{firstearlemma}, $D$ is still a valid ear-decomposition of $G$. Since $Q$ was a nontrivial ear, $P'$ has length at least 3, hence this procedure reduces the number of 2-ears in $D$ by one. Further, $P'$ is an open ear, thus $D$ is still an open ear-decomposition. If $Q$ was an even ear, then this procedure reduced the number of even ears by 2, contradicting our assumption that $D$ is evenmin. Hence $Q$ was an odd ear, and the number of even ears remains unchanged in $D$. 
    
    After repeating the above procedure for all non-pendant 2-ears, all 2-ears in $D$ are pendant. Next, we make all 3-ears pendant. As long as $D$ has a non-pendant 3-ear, we repeat the following procedure. Prior to each iteration, we relabel the ears in $D$ such that the $i^{\text{th}}$ ear is labelled $P_i$.
    
    Let $P$ be the first non-pendant 3-ear in $D$. Let $x$ and $z$ be the endpoints of $P$, and let $v$ and $y$ be the internal vertices of $P$ adjacent to $x$ and $z$ respectively (as shown in Figure \ref{threeear}).
    
    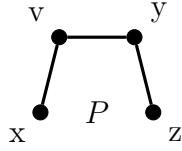
\begin{figure}[ht]
        \centering
        \begin{tikzpicture}[scale=0.5]
        \begin{scope}[every node/.style={circle, fill=black, draw, inner sep=0pt,
        minimum size = 0.2cm
        }]
            \node[label={[label distance=5]225:x}] (x) at (0,0) {};
            \node[label={[label distance=5]315:z}] (z) at (3,0) {};
            \node[label={[label distance=5]135:v}] (v) at (0.5,2) {};
            \node[label={[label distance=5]45:y}] (y) at (2.5,2) {};
        \end{scope}
        \begin{scope}[every edge/.style={draw=black}]
        \path[very thick] (x) edge node {} (v);
        \path[very thick] (v) edge node {} (y);
        \path[very thick] (y) edge node {} (z);
        \end{scope}
        \begin{scope}[every node/.style={draw=none,rectangle}]
        \node (p) at (1.5,0) {$P$};
        \end{scope}
        \end{tikzpicture}
        \caption{$P$ is a 3-ear}
        \label{threeear}
    \end{figure}
    
    \begin{description}
    \item \textbf{Case 1. } There exists a nontrivial ear $Q$ with endpoints $v$ and $y$.
    
    Let $P'$ be the ear with endpoints $x$ and $z$ consisting of all of the edges of $Q$ and the edges $vx$ and $yz$ (as shown by the thick dashed line in Figure \ref{threeearcaseoneb}). Delete the ears $P$ and $Q$ from $D$, and add the ear $P'$ to $D$ in the position of $P$, and the trivial ear $xy$ at the end of $D$. The resulting ear-decomposition $D$ is valid for $G$. Since $Q$ is nontrivial, $P'$ has length at least 4 and is a long ear. Further, $D$ is still an open ear-decomposition, and since the length of $P'$ has the same parity as the length of $Q$, $D$ is still evenmin.
    
    \begin{figure}[ht]
        \begin{subfigure}{0.47\textwidth}
            \centering
            \begin{tikzpicture}[scale=0.5]
            \begin{scope}[every node/.style={circle, fill=black, draw, inner sep=0pt,
            minimum size = 0.2cm
            }]
                \node[label={[label distance=5]225:x}] (x) at (0,0) {};
                \node[label={[label distance=5]315:z}] (z) at (3,0) {};
                \node[label={[label distance=5]165:v}] (v) at (0.5,2) {};
                \node[label={[label distance=5]15:y}] (y) at (2.5,2) {};
            \end{scope}
            \begin{scope}[every edge/.style={draw=black}]
            \path[very thick] (x) edge node {} (v);
            \path[very thick] (v) edge node {} (y);
            \path[very thick] (y) edge node {} (z);
            \path[very thick, dotted] (v) edge[min distance=80, bend left=120] node[] {} (y);
            \end{scope}
            \begin{scope}[every node/.style={draw=none,rectangle}]
            \node (p) at (1.5,0) {$P$};
            \node (q) at (1.5,3) {$Q$};
            \end{scope}
            \end{tikzpicture}
            \caption{}
            \label{threeearcaseonea}
        \end{subfigure}
        \hspace*{\fill}
        \begin{subfigure}{0.47\textwidth}
            \centering
            \begin{tikzpicture}[scale=0.5]
            \begin{scope}[every node/.style={circle, fill=black, draw, inner sep=0pt,
            minimum size = 0.2cm
            }]
                \node[label={[label distance=5]225:x}] (x) at (0,0) {};
                \node[label={[label distance=5]315:z}] (z) at (3,0) {};
                \node[label={[label distance=5]165:v}] (v) at (0.5,2) {};
                \node[label={[label distance=5]15:y}] (y) at (2.5,2) {};
            \end{scope}
            \begin{scope}[every edge/.style={draw=black}]
            \path[very thick, dashed] (x) edge node {} (v);
            \path[dashed] (v) edge node {} (y);
            \path[very thick, dashed] (y) edge node {} (z);
            \path[very thick, dashed] (v) edge[min distance=80, bend left=120] node[] {} (y);
            \end{scope}
            \begin{scope}[every node/.style={draw=none,rectangle}]
            \node (p') at (1.5,1) {$P'$};
            \end{scope}
            \end{tikzpicture}
            \caption{}
            \label{threeearcaseoneb}
        \end{subfigure}
        \caption{There exists a nontrivial ear $Q$ with endpoints $v$ and $y$}
    \end{figure}
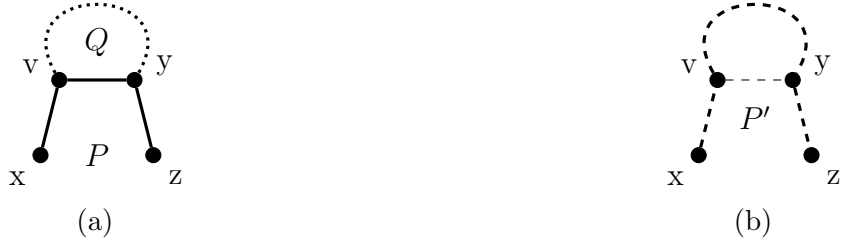
    
    \item \textbf{Case 2. } There exist ears $Q_1$ and $Q_2$ such that $Q_1$ has endpoints $x$ and $y$, $Q_2$ has endpoints $v$ and $z$, and at least one ear in $\{Q_1,Q_2\}$ is nontrivial.
    
    Let $P'$ be the ear with endpoints $x$ and $z$ consisting of all the edges of $Q_1$ and $Q_2$ and the edge $vy$ (as shown by the thick dashed line in Figure \ref{threeearcasetwob}). Delete the ears $P$, $Q_1$ and $Q_2$ from $D$, and add the ear $P'$ to $D$ in the position of $P$, and the trivial ears $ax$ and $by$ at the end of $D$. The resulting open ear-decomposition $D$ is valid for $G$. If both $Q_1$ and $Q_2$ are even ears, then this procedure reduces the number of even ears in $D$ by 2, contradicting our assumption that $D$ was evenmin. If either zero or exactly one of these ears is even, then $D$ remains evenmin.
    
    \begin{figure}[ht]
        \begin{subfigure}{0.47\textwidth}
            \centering
            \begin{tikzpicture}[scale=0.5]
            \begin{scope}[every node/.style={circle, fill=black, draw, inner sep=0pt,
            minimum size = 0.2cm
            }]
                \node[label={[label distance=5]225:x}] (x) at (0,0) {};
                \node[label={[label distance=5]315:z}] (z) at (3,0) {};
                \node[label={[label distance=5]135:v}] (v) at (0.5,2) {};
                \node[label={[label distance=5]45:y}] (y) at (2.5,2) {};
            \end{scope}
            \begin{scope}[every edge/.style={draw=black}]
            \path[very thick] (x) edge node {} (v);
            \path[very thick] (v) edge node {} (y);
            \path[very thick] (y) edge node {} (z);
            \path[very thick, dotted] (x) edge[min distance=180, bend left=120] node[] {} (y);
            \path[very thick, dotted] (v) edge[min distance=180, bend left=120] node[] {} (z);
            \end{scope}
            \begin{scope}[every node/.style={draw=none,rectangle}]
            \node (p) at (1.5,0) {$P$};
            \node (q1) at (-1,3.5) {$Q_1$};
            \node (q2) at (4,3.5) {$Q_2$};
            \end{scope}
            \end{tikzpicture}
            \caption{}
            \label{threeearcasetwoa}
        \end{subfigure}
        \hspace*{\fill}
        \begin{subfigure}{0.47\textwidth}
            \centering
            \begin{tikzpicture}[scale=0.5]
            \begin{scope}[every node/.style={circle, fill=black, draw, inner sep=0pt,
            minimum size = 0.2cm
            }]
                \node[label={[label distance=5]225:x}] (x) at (0,0) {};
                \node[label={[label distance=5]315:z}] (z) at (3,0) {};
                \node[label={[label distance=5]135:v}] (v) at (0.5,2) {};
                \node[label={[label distance=5]45:y}] (y) at (2.5,2) {};
            \end{scope}
            \begin{scope}[every edge/.style={draw=black}]
            \path[dashed] (x) edge node {} (v);
            \path[very thick, dashed] (v) edge node {} (y);
            \path[dashed] (y) edge node {} (z);
            \path[very thick, dashed] (x) edge[min distance=180, bend left=120] node[] {} (y);
            \path[very thick, dashed] (v) edge[min distance=180, bend left=120] node[] {} (z);
            \end{scope}
            \begin{scope}[every node/.style={draw=none,rectangle}]
            \node (p') at (1.5,1) {$P'$};
            \end{scope}
            \end{tikzpicture}
            \caption{}
            \label{threeearcasetwob}
        \end{subfigure}
        \caption{There exist ears $Q_1$ from $x$ to $y$ and $Q_2$ from $v$ to $z$, not both trivial}
    \end{figure}
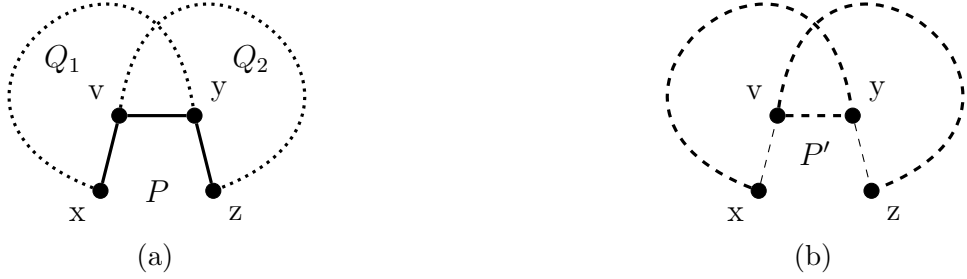
    
    \item \textbf{Case 3. } Otherwise, let $Q$ be the first nontrivial ear with an endpoint at an internal vertex of $P$ (say $y$). Let $w$ be the other endpoint of $Q$ (as shown in Figure \ref{threeearcasethree}).
    
    \begin{figure}[ht]
        \centering
        \begin{tikzpicture}[scale=0.5]
            \begin{scope}[every node/.style={circle, fill=black, draw, inner sep=0pt,
            minimum size = 0.2cm
            }]
                \node[label={[label distance=5]225:x}] (x) at (0,0) {};
                \node[label={[label distance=5]135:v}] (v) at (0.5,2) {};
                \node[label={[label distance=5]135:y}] (y) at (2.5,2) {};
                \node[label={[label distance=5]315:z}] (z) at (3,0) {};
                \node[label={[label distance=5]45:w}] (w) at (7.5,2) {};
                \node[] (n1) at (4,3.5) {};
                \node[] (n2) at (6,3.5) {};
            \end{scope}
            \begin{scope}[every edge/.style={draw=black}]
            \path[very thick] (x) edge node {} (v);
            \path[very thick] (v) edge node {} (y);
            \path[very thick] (y) edge node {} (z);
            \path[very thick] (y) edge node {} (n1);
            \path[very thick, dotted] (n1) edge[bend left=40] node {} (n2);
            \path[very thick] (n2) edge node {} (w);
            \end{scope}
            \begin{scope}[every node/.style={draw=none,rectangle}]
            \node (p) at (1.5,0) {$P$};
            \node (q) at (5,2) {$Q$};
            \end{scope}
        \end{tikzpicture}
    \caption{}
    \label{threeearcasethree}
    \end{figure}
    
    \begin{description}
    \item \textbf{Case 3a. } $w \neq x$. Let $P'$ be the ear with endpoints $x$ and $w$, consisting of the edges of $P$ and $Q$ except $yz$. Delete the ears $P$ and $Q$ from $D$, add the ear $P'$ to $D$ in the position of $Q$, and the trivial ear $yz$ at the end of $D$. $P'$ is both open and long, and the resulting ear-decomposition $D$ is valid for $G$ by Lemma \ref{firstearlemma}. Since the length of $P'$ has the same parity as the length of $Q$, the number of even ears remains the same.
    
    \begin{figure}[ht]
    \begin{subfigure}{0.47\textwidth}
        \centering
        \begin{tikzpicture}[scale=0.5]
            \begin{scope}[every node/.style={circle, fill=black, draw, inner sep=0pt,
            minimum size = 0.2cm
            }]
                \node[label={[label distance=3]225:x}] (x) at (0,0) {};
                \node[label={[label distance=3]135:v}] (v) at (0.5,2) {};
                \node[label={[label distance=3]135:y}] (y) at (2.5,2) {};
                \node[label={[label distance=3]315:z}] (z) at (3,0) {};
                \node[label={[label distance=3]45:u}] (u) at (3,-2) {};
                \node[draw=none,fill=none] (n1) at (1.5,-2.5) {};
                \node[draw=none,fill=none] (n2) at (4.5,-2.5) {};
            \end{scope}
            \begin{scope}[every edge/.style={draw=black}]
            \path[very thick] (x) edge node {} (v);
            \path[very thick] (v) edge node {} (y);
            \path[very thick] (y) edge node {} (z);
            \path[very thick, dotted] (y) edge[bend right=120, min distance=50mm] node {} (x);
            \path[thick, dashed] (v) edge[bend left=20] node {} (u);
            \path[] (n1) edge[bend left=10] node {} (u);
            \path[] (u) edge[bend left=10] node {} (n2);
            \end{scope}
            \begin{scope}[every node/.style={draw=none,rectangle}]
            \node (p) at (1.25,0) {$P$};
            \node (q) at (-2.5,2) {$Q$};
            \node (r) at (3,-3) {$R$};
            \end{scope}
        \end{tikzpicture}
    \caption{}
    \label{threeearcasethreea}
    \end{subfigure}
    \hspace*{\fill} 
    \begin{subfigure}{0.47\textwidth}
        \centering
        \begin{tikzpicture}[scale=0.5]
            \begin{scope}[every node/.style={circle, fill=black, draw, inner sep=0pt,
            minimum size = 0.2cm
            }]
                \node[label={[label distance=5]225:x}] (x) at (0,0) {};
                \node[label={[label distance=5]135:v}] (v) at (0.5,2) {};
                \node[label={[label distance=5]135:y}] (y) at (2.5,2) {};
                \node[label={[label distance=5]315:z}] (z) at (3,0) {};
                \node[label={[label distance=5]270:t}] (n1) at (4,6) {};
                \node[] (n2) at (2,4.5) {};
                \node[] (n3) at (6,4.5) {};
                \node[] (n4) at (4,3) {};
                \node[] (n5) at (8,3) {};
            \end{scope}
            \begin{scope}[every edge/.style={draw=black}]
            \path[very thick] (x) edge node {} (v);
            \path[very thick] (v) edge node {} (y);
            \path[very thick] (y) edge node {} (z);
            \path[very thick, dotted] (y) edge[bend right=120, min distance=50mm] node {} (x);
            \path[very thick, dashed] (v) edge[bend left=60] node {} (n1);
            \path[thick, dotted] (n1) edge[bend right=40] node {} (n2);
            \path[very thick, dashed] (n1) edge[bend left=40] node {} (n3);
            \path[very thick, dashed] (n3) edge[bend right=40] node {} (n4);
            \path[thick, dotted] (n3) edge[bend left=40] node {} (n5);
            \end{scope}
            \begin{scope}[every node/.style={draw=none,rectangle}]
            \node (p) at (1.5,0) {$P$};
            \node (q) at (-2.5,2) {$Q$};
            \node (r) at (0.5,5.5) {$R$};
            \node (l1) at (6,6) {$L_1$};
            \node (l2) at (5.25,3.75) {$L_2$};
            \end{scope}
        \end{tikzpicture}
    \caption{}
    \label{threeearcasethreeb}
    \end{subfigure}
    \caption{}
    \end{figure}
    
    \item \textbf{Case 3b. } $w=x$, and $v$ is the endpoint of a trivial ear $uv$ such that $u \in X$. We refer the reader to Figure \ref{threeearcasethreea} for this case.
    
    Let $R$ be the ear containing $u$ internally. If $R$ is a short ear, then it is pendant (since $P$ is the first non-pendant short ear). We have the following cases:
    \begin{enumerate}[label=(\roman*)]
        \item $R$ is a 2-ear. Choose an endpoint $a$ of $R$ that does not coincide with $z$, and let $P'$ be the ear $au \cup uv \cup vy \cup yz$. Delete $P$ and $R$ from $D$ and add the 4-ear $P'$ in the position of $P$, and the new trivial ears at the end of $D$.
        \item $R$ is a 3-ear. Choose the endpoint $a$ of $R$ which is not adjacent to $u$ in $R$. Let $R'$ be the ear of length 2 in $R$ with endpoints $a$ and $u$. If $a$ does not coincide with $x$, let $P'$ be the ear $R' \cup uv \cup vy \cup Q$, which has the same parity as $Q$. Delete $R$, $P$ and $Q$ from $D$ and add $P'$ in the position of $P$ in $D$ and the trivial ears at the end of $D$. If $a$ coincides with $x$, let $P'$ be the ear $R' \cup uv \cup vy \cup yz$ of length 5. Delete $R$ and $P$ from $D$ and add $P'$ in the position of $P$ in $D$ and the trivial ears at the end of $D$.
    \end{enumerate}
     In both of the above cases, we do not create extra even ears, and the resulting ear-decomposition is open, evenmin, and valid for $G$.
     
     If $R$ is a long ear, let $P'$ be the ear $Q \cup yv \cup vu$. Delete $P$ and $Q$ from $D$ and add $P'$ in the position of $P$ in $D$, and the trivial ears at the end of $D$. This ear has the same parity as the ear $Q$, so the resulting ear-decomposition is open, evenmin, and valid for $G$.
    
    \item \textbf{Case 3c. } Otherwise, since the graph has minimum degree at least 3, $v$ is adjacent to a vertex $u \notin X\cup\{y\}$. Observe that this is the only remaining case.
    
    Let $R$ be the ear containing the edge $uv$, and let $t$ be the other endpoint of $R$. In particular, if $uv$ is a trivial ear, then $t$ is the vertex $u$. We refer the reader to Figure \ref{threeearcasethreeb}, which will be useful throughout the following analysis.
    
    The following sub-procedure constructs three sets of ears ($\myN$, $\myL$ and $\myM$), which are later used to modify the ear-decomposition in order to add the internal vertices of $P$ to a new long ear. The procedure adds some of the existing ears of $D$ to the set $\myN$, and constructs sets of new ears $\myL$ and $\myM$. When suitable sets are found, the ears in $\myN$ are deleted from $D$ and replaced with the ears in $\myM$, along with a suitably constructed long ear. 

    Repeat the following sub-procedure until $t$ is in $X \cup \{v,y\}$. Initialize $\myN$, $\myL$ and $\myM$ with the empty set. Let $S$ be the ear that internally contains $t$, with endpoints $c$ and $d$. Add $S$ to $\myN$. Partition the edges of $S$ into the ears $\mySa$ (with endpoints $c$ and $t$) and $\mySb$ (with endpoints $t$ and $d$). If $S$ is an even ear, either $\mySa$ and $\mySb$ are both odd or they are both even. If $S$ is an odd ear, suppose without loss of generality that $\mySa$ is even and $\mySb$ is odd. Add $\mySa$ to $\myL$ and $\mySb$ to $\myM$. Set $t = c$.
    
    The following observations will be useful in our analysis.
    \begin{itemize}
        \item If $\mySb$ is even, then $S$ is even (thus replacing $S$ with $\mySb$ in any ear-decomposition will not, by itself, increase the number of even ears in that ear-decomposition).
        \item If $\mySa$ is odd, then $\mySb$ is odd and $S$ is even.
    \end{itemize}
    
    When this sub-procedure terminates, we have the following cases:
    \begin{enumerate}[label=(\roman*)]
        \item $t \notin \{v,y,z\}$. Let $P'$ be the ear $zy \cup yv \cup R \cup L_1 \cup \ldots \cup L_k$, where $\myL = \{L_1,\ldots,L_k\}$. Delete $P$ and all ears in $\myN$ from $D$, and for each ear in $\myN$, replace it with the corresponding ear in $\myM$ at the same position in $D$ (this does not add any extra even ears, but might create new non-pendant short ears; observe that these ears occur later in the ear-decomposition than the newly created long ear in this iteration). In the position of $P$, add the ear $P'$. Add the trivial ear $xv$ at the end of $D$. The resulting ear decomposition is valid because the sub-procedure is terminated when a vertex in $X$ is encountered, thus every ear in $\myN$ appeared after $P$ in $D$, hence every ear in $\myM$ appears after $P'$.
        
        If $\myL$ contains only even ears, then $P'$ has the same parity as $R$ and we do not introduce any extra even ears. If not, then $\myL$ contains at least one odd ear, in which case the corresponding ear in $\myN$ is even, and the corresponding ear in $\myM$ is odd. Since we have already reduced the number of even ears by at least one, $P'$ is even and we do not introduce extra even ears.
        \item $t = v$. Discard the previous sets $\myN$, $\myL$ and $\myM$. Choose $u$ to be the neighbour of $v$ on the last ear that was labelled $S$. Let $R$ be this ear and let $t$ be its other endpoint. Since every choice of $R$ that we make in this manner appears strictly earlier in the ear decomposition than all the previous choices, the sub-procedure can only be repeated $O(n)$ times before we no longer have this case.
        \item $t = y$. Let $\myL = \{L_1,\ldots,L_k\}$, and let $P'$ be the ear $xv \cup R \cup L_1 \cup \ldots \cup L_k \cup yz$. Delete $P$ and all ears in $\myN$ from $D$, and for each ear in $\myN$, replace it with the corresponding ear in $\myM$ at the same position in $D$. In the position of $P$, add the ear $P'$. Add the trivial ear $vy$ at the end of $D$. This ear-decomposition is valid, as explained earlier.
        
        If all of the ears in $L$ are even, then $P'$ has the same parity as $R$, and this step does not introduce any extra even ears. If not, then $L$ contains at least one odd ear, in which case $P'$ is even and we do not introduce extra even ears (as explained earlier).
        \item $t = z$. Let $P'$ be the ear $Q \cup vy \cup R \cup L_1 \cup \ldots \cup L_k$, where $\myL = \{L_1,\ldots,L_k\}$. Delete $P$, and in the position of $P$, add the ear $P'$. Delete all ears in $\myN$ from $D$, and for each ear in $\myN$, replace it with the corresponding ear in $\myM$ at the same position in $D$. Add the trivial ears $xv$ and $yz$ at the end of $D$. As explained earlier, this ear-decomposition is valid. If all the ears in $\myL$ are even, we have the following cases:
        \begin{enumerate}[label=(\alph*)]
            \item $Q$ and $R$ are odd. In this case, $P'$ is odd and we do not introduce extra even ears.
            \item Exactly one of $Q$ and $R$ is even. In this case, $P'$ is even and we do not introduce extra even ears.
            \item $Q$ and $R$ are even. In this case, $P'$ is odd, contradicting the assumption that $D$ was evenmin; this case cannot occur.
        \end{enumerate}
        If $\myL$ contains an odd ear, then the corresponding ear in $\myM$ is odd and the corresponding ear in $\myN$ is even. Since we have already reduced the number of even ears by at least one, we do not introduce extra even ears.
        
    \end{enumerate}
    
    \end{description}
    
    \end{description}
    
    In all of the above cases, the internal vertices of $P$ are added to a long ear in $D$. If $P$ was a 3-ear, then it is possible that we created new non-pendant short ears that appear after $P'$ in the new ear-decomposition. These short ears are handled in future iterations of the above procedure, in the same manner as above (that is, first we handle all non-pendant 2-ears, then we handle the first non-pendant 3-ear).
    
    In each iteration, the above procedure takes time polynomial in $|V(G)|$ for the non-pendant short ear under consideration. Further, if this short ear is a 3-ear, the internal vertices of this short ear are added to long ears, and are never again added to a non-pendant short ear until termination. As a consequence, the set $X$ in each iteration is a strict superset of the corresponding set in any previous iteration. Hence the running time for the whole procedure is polynomial. 
    
    On termination of this procedure, the ear-decomposition $D$ is open and has $\phi(G)$ even ears, and all of its short ears are pendant.
\end{proof}

\begin{thm} \label{opennicetheorem}
    Given a 2-vertex-connected graph $G$ with minimum degree at least 3, and an associated evenmin ear-decomposition $D$ in which all short ears are pendant, an open evenmin nice ear-decomposition of $G$ can be computed in polynomial time.
\end{thm}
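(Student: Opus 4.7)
The plan is to start from the ear-decomposition $D$ guaranteed by Theorem \ref{pendanttheorem} --- which is open, evenmin, and has all short ears pendant --- and then iteratively apply local exchange operations to $D$ until it meets the remaining niceness conditions of Seb\H{o}--Vygen (roughly, maximality of an associated earmuff together with a few technical restrictions on how short pendant ears interact with the rest of $D$). The general shape of each operation mirrors the moves used in Theorem \ref{pendanttheorem}: whenever $D$ contains a forbidden substructure, identify a constant number of ears involved, splice some of their edges together into a single replacement ear placed at the earliest allowable position in $D$, and deposit the leftover edges as trivial ears at the end of $D$. Validity of each such swap follows from Lemma \ref{firstearlemma} (or a direct check), and parity bookkeeping in the style of Theorem \ref{pendanttheorem} ensures that the number of even ears never exceeds $\phi(G)$ and that the decomposition stays open.

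Next I would define a lexicographic potential $\Phi(D)$ measuring distance from nice: its first coordinate is the number of even ears (fixed at $\phi(G)$ throughout), its second coordinate is $|E(G)|$ minus the size of a maximum earmuff in $D$ (to be driven down by swaps that grow the earmuff), and its third coordinate counts the remaining niceness violations, broken by the total length of short pendant ears. Each local exchange is engineered either to strictly decrease $\Phi$, or to preserve its leading coordinates while strictly decreasing a later one. Since $\Phi$ ranges over polynomially many integer values bounded by polynomials in $|V(G)|$ and $|E(G)|$, and each iteration inspects only a constant-sized substructure of $D$ and is implementable in polynomial time, the procedure terminates after a polynomial number of iterations.

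The delicate point, and the main obstacle, is preserving the short-ears-are-pendant property while driving $\Phi$ down. A niceness-improving swap may inadvertently create new non-pendant short ears, exactly as observed in Case 3 of Theorem \ref{pendanttheorem}. To handle this I would interleave the niceness step with a clean-up step that re-applies the construction of Theorem \ref{pendanttheorem} to any newly created non-pendant short ears. Crucially, as noted in that proof, every such new short ear appears strictly later in $D$ than the long ear produced by the niceness step, so the clean-up only rearranges ears lying after the most recent improvement and cannot undo it; the leading coordinates of $\Phi$ are therefore preserved and the overall process still strictly decreases $\Phi$ at every outer iteration. Combining these two layers of local improvement yields an open evenmin nice ear-decomposition in polynomial time, as required.
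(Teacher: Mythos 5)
Your proposal describes a local-improvement framework but leaves out the content that actually constitutes the proof. After Theorem \ref{pendanttheorem}, the only niceness property still missing is that no edge of $G$ joins internal vertices of two different short (pendant) ears; maximality of an earmuff is \emph{not} part of the definition of nice and plays no role here (it enters only in the analysis of Theorem \ref{algorithmtheorem} via $L_\mu(G,M)$ and Lemma \ref{otherlemma}), so the second coordinate of your potential $\Phi$ is solving a problem that does not exist, and you give no argument that your unspecified swaps actually drive it down. More importantly, the entire difficulty of the theorem is to exhibit, for every configuration of two adjacent short pendant ears, a concrete splice that simultaneously (i) yields an \emph{open} ear (the endpoints of the two short ears may coincide, which forces a different choice of which edges to keep), (ii) does not increase the number of even ears (e.g.\ merging a 2-ear with a 3-ear produces a 4-ear, which is only acceptable because a 2-ear is simultaneously destroyed; merging two 3-ears must be done so that the new ear has exactly 5 or 7 edges, not 6), and (iii) keeps all short ears pendant. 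The paper's proof is precisely this case analysis, and one of its cases (both ears are 3-ears with a shared endpoint) genuinely needs the minimum-degree-3 hypothesis to find a third neighbour of an internal vertex and route the new ear through a third ear $R$, with further subcases on whether $R$ is long, a 2-ear, or a 3-ear. Your proposal never invokes the degree hypothesis and never verifies (i)--(iii) for any configuration; asserting that ``validity follows from Lemma \ref{firstearlemma} or a direct check'' and that ``parity bookkeeping'' works is assuming exactly what must be shown.

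There is also a circularity in your termination argument. You allow the niceness step to create new non-pendant short ears and propose to repair them by re-running the procedure of Theorem \ref{pendanttheorem}; but that clean-up can itself create new pendant short ears and new adjacencies between their internal vertices, and you give no reason why the leading coordinates of $\Phi$ survive this (in particular the earmuff coordinate, for which the clean-up gives no guarantee). The paper avoids this entirely: each of its exchanges replaces the offending pendant short ears by a single open pendant \emph{long} ear whose endpoints are endpoints of the old ears (hence not internal to any short ear, since all short ears were pendant), so no new violations of any kind are introduced and the number of short ears strictly decreases, giving termination for free. To make your argument work you would need to either specify exchanges with this self-contained property or prove a genuine monotonicity statement for the interleaved clean-up; as written the proof has a gap at its core.
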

\begin{proof}
    Since $D$ is open and evenmin, and all short ears of $D$ are pendant, it remains to obtain the property that there are no edges connecting an internal vertex of one short ear to an internal vertex of another short ear of $D$.
    
    Since $D$ has $\phi(G)$ even ears, there are no edges connecting the internal vertices of 2-ears. If not, we could replace the 2-ears and the trivial ear connecting their internal vertices by a pendant 3-ear and two trivial ears, reducing the number of even ears by two, contradicting the assumption that $D$ is evenmin. Since we have two choices for each end vertex of such a 3-ear, we can always choose its end vertices such that it is open.
    
    As long as $D$ has two short pendant ears $P'$ and $P''$ with an edge $e$ connecting an internal vertex of $P'$ with an internal vertex of $P''$, we repeat the following procedure.
    
    \begin{description}
    \item \textbf{Case 1. } One of the ears $P'$ and $P''$ is a 2-ear.
    
    Without loss of generality, assume $P'$ is a 2-ear and $P''$ is a 3-ear.
    
    Let $a$ and $b$ be the endpoints of $P'$ and $z$ be the internal vertex of $P'$. Let $c$ and $d$ be the endpoints of $P''$ and $x$ and $y$ be the internal vertices of $P''$ such that $x$ is adjacent to both $c$ and $z$ (Figure \ref{case1a}). Construct the ear $S$ as shown by the thick paths in Figures \ref{case1b} and \ref{case1c}, that is, $S$ consists of the edges $az$, $zx$, $xy$ and $yd$ if the vertices $a$ and $d$ are distinct, and the edges $bz$, $zx$, $xy$ and $yd$ if they coincide. 
    
    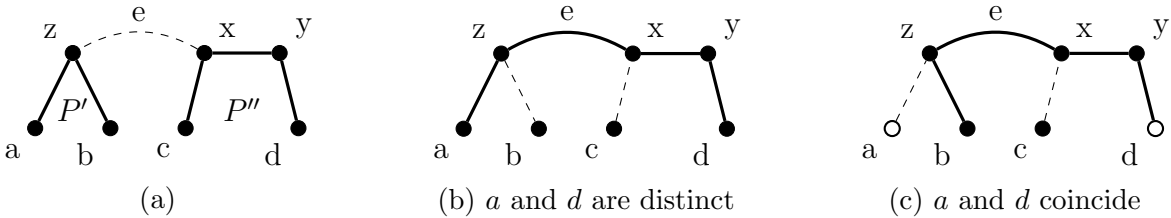
\begin{figure}[ht]
    \begin{subfigure}{0.31\textwidth}
        \centering
        \begin{tikzpicture}[scale=0.5]
        \begin{scope}[every node/.style={circle, fill=black, draw, inner sep=0pt,
        minimum size = 0.2cm
        }]
            \node[label={[label distance=5]225:a}] (a) at (0,0) {};
            \node[label={[label distance=5]225:b}] (b) at (2,0) {};
            \node[label={[label distance=5]225:c}] (c) at (4,0) {};
            \node[label={[label distance=5]225:d}] (d) at (7,0) {};
            \node[label={[label distance=5]135:z}] (z) at (1,2) {};
            \node[label={[label distance=5]45:x}] (x) at (4.5,2) {};
            \node[label={[label distance=5]45:y}] (y) at (6.5,2) {};
        \end{scope}
        \begin{scope}[every edge/.style={draw=black}]
        \path[very thick] (a) edge node {} (z);
        \path[very thick] (z) edge node {} (b);
        \path[very thick] (c) edge node {} (x);
        \path[very thick] (x) edge node {} (y);
        \path[very thick] (y) edge node {} (d);
        \path[dashed] (x) edge[bend right=30] node[label={[label distance=0]90:e}] {} (z);
        \end{scope}
        \begin{scope}[every node/.style={draw=none,rectangle}]
            \node (p') at (1,0.5) {$P'$};
            \node (p'') at (5.5,0.5) {$P''$};
        \end{scope}
        \end{tikzpicture}
    \caption{}
    \label{case1a}
    \end{subfigure}
    \hspace*{\fill} 
    \begin{subfigure}{0.31\textwidth}
        \centering
        \begin{tikzpicture}[scale=0.5]
        \begin{scope}[every node/.style={circle, fill=black, draw, inner sep=0pt,
        minimum size = 0.2cm
        }]
            \node[label={[label distance=5]225:a}] (a) at (0,0) {};
            \node[label={[label distance=5]225:b}] (b) at (2,0) {};
            \node[label={[label distance=5]225:c}] (c) at (4,0) {};
            \node[label={[label distance=5]225:d}] (d) at (7,0) {};
            \node[label={[label distance=5]135:z}] (z) at (1,2) {};
            \node[label={[label distance=5]45:x}] (x) at (4.5,2) {};
            \node[label={[label distance=5]45:y}] (y) at (6.5,2) {};
        \end{scope}
        \begin{scope}[every edge/.style={draw=black}]
        \path[very thick] (a) edge node {} (z);
        \path[dashed] (z) edge node {} (b);
        \path[dashed] (c) edge node {} (x);
        \path[very thick] (x) edge node {} (y);
        \path[very thick] (y) edge node {} (d);
        \path[very thick] (x) edge[bend right=30] node[label={[label distance=0]90:e}] {} (z);
        \end{scope}
        \end{tikzpicture}
    \caption{$a$ and $d$ are distinct}
    \label{case1b}
    \end{subfigure}
    \hspace*{\fill} 
    \begin{subfigure}{0.31\textwidth}
        \centering
        \begin{tikzpicture}[scale=0.5]
        \begin{scope}[every node/.style={circle, fill=black, draw, inner sep=0pt,
        minimum size = 0.2cm
        }]
            \node[fill=none, thick, label={[label distance=5]225:a}] (a) at (0,0) {};
            \node[label={[label distance=5]225:b}] (b) at (2,0) {};
            \node[label={[label distance=5]225:c}] (c) at (4,0) {};
            \node[fill=none, thick, label={[label distance=5]225:d}] (d) at (7,0) {};
            \node[label={[label distance=5]135:z}] (z) at (1,2) {};
            \node[label={[label distance=5]45:x}] (x) at (4.5,2) {};
            \node[label={[label distance=5]45:y}] (y) at (6.5,2) {};
        \end{scope}
        \begin{scope}[every edge/.style={draw=black}]
        \path[dashed] (a) edge node {} (z);
        \path[very thick] (z) edge node {} (b);
        \path[dashed] (c) edge node {} (x);
        \path[very thick] (x) edge node {} (y);
        \path[very thick] (y) edge node {} (d);
        \path[very thick] (x) edge[bend right=30] node[label={[label distance=0]90:e}] {} (z);
        \end{scope}
        \end{tikzpicture}
    \caption{$a$ and $d$ coincide}
    \label{case1c}
    \end{subfigure}
    \caption{$P'$ is a 2-ear} \label{case1}
    \end{figure}
    
    Remove the ears $P'$ and $P''$ from $D$, and add the ear $S$ in place of the ear $P'$, followed by trivial ears consisting of the remaining edges from $P'$ and $P''$ that are not in $S$. Since $P'$ and $P''$ are both pendant ears, the new ear-decomposition is a valid ear-decomposition of $G$. Since the end vertices of $S$ are distinct, it is open, and since we deleted a 2-ear from $D$ before adding a 4-ear to it, the number of even ears in $D$ remains equal to $\phi(G)$.
    
    \item \textbf{Case 2. } Both $P'$ and $P''$ are 3-ears.
    
    Let  $a$ and $b$ be the endpoints of $P'$ and let $v$ and $w$ be its internal vertices adjacent to $a$ and $b$ respectively. Let  $c$ and $d$ be the endpoints of $P''$ and let $x$ and $y$ be its internal vertices adjacent to $c$ and $d$ respectively (Figure \ref{case2}). Suppose $v$ and $y$ are adjacent. We have the following cases.
    
    \begin{figure}[ht]
    \centering
        \begin{tikzpicture}[scale=0.5]
        \begin{scope}[every node/.style={circle, fill=black, draw, inner sep=0pt,
        minimum size = 0.2cm
        }]
            \node[label={[label distance=5]225:a}] (a) at (0,0) {};
            \node[label={[label distance=5]225:b}] (b) at (3,0) {};
            \node[label={[label distance=5]225:c}] (c) at (5,0) {};
            \node[label={[label distance=5]225:d}] (d) at (8,0) {};
            \node[label={[label distance=5]135:v}] (v) at (0.5,2) {};
            \node[label={[label distance=5]45:w}] (w) at (2.5,2) {};
            \node[label={[label distance=5]135:x}] (x) at (5.5,2) {};
            \node[label={[label distance=5]45:y}] (y) at (7.5,2) {};
        \end{scope}
        \begin{scope}[every edge/.style={draw=black}]
        \path[very thick] (a) edge node {} (v);
        \path[very thick] (v) edge node {} (w);
        \path[very thick] (w) edge node {} (b);
        \path[very thick] (c) edge node {} (x);
        \path[very thick] (x) edge node {} (y);
        \path[very thick] (y) edge node {} (d);
        \path[dashed] (v) edge[bend left=35] node[label={[label distance=0]90:e}] {} (y);
        \end{scope}
        \begin{scope}[every node/.style={draw=none,rectangle}]
            \node (p') at (1.5,0.5) {$P'$};
            \node (p'') at (6.5,0.5) {$P''$};
        \end{scope}
        \end{tikzpicture}
    \caption{Both $P'$ and $P''$ are 3-ears}
    \label{case2}
    \end{figure}
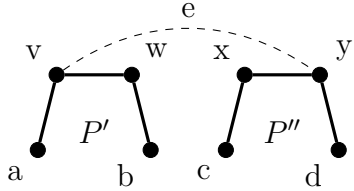
    
    \begin{description}
        \item \textbf{Case 2a. } The vertices $b$ and $c$ are distinct.
        
        Construct the ear $S$ with endpoints $b$ and $c$ and edges $bw$, $wv$, $vy$, $yx$ and $xc$ (as shown by the thick path in Figure \ref{case2a}). Remove the ears $P'$ and $P''$ from $D$, add the ear $S$ in place of the ear $P'$, and add the trivial ears consisting of the remaining edges from $P'$ and $P''$ that are not in $S$ at the end of $D$. Since $P'$ and $P''$ are both pendant ears, the  new ear-decomposition is a valid ear-decomposition of $G$. Since the end vertices of $S$ are distinct, it is open, and since $S$ is an odd ear, the number of even ears in $D$ remains equal to $\phi(G)$.
    
        \begin{figure}[ht]
        \centering
            \begin{tikzpicture}[scale=0.5]
            \begin{scope}[every node/.style={circle, fill=black, draw, inner sep=0pt,
            minimum size = 0.2cm
            }]
                \node[label={[label distance=5]225:a}] (a) at (0,0) {};
                \node[label={[label distance=5]225:b}] (b) at (3,0) {};
                \node[label={[label distance=5]225:c}] (c) at (5,0) {};
                \node[label={[label distance=5]225:d}] (d) at (8,0) {};
                \node[label={[label distance=5]135:v}] (v) at (0.5,2) {};
                \node[label={[label distance=5]45:w}] (w) at (2.5,2) {};
                \node[label={[label distance=5]135:x}] (x) at (5.5,2) {};
                \node[label={[label distance=5]45:y}] (y) at (7.5,2) {};
            \end{scope}
            \begin{scope}[every edge/.style={draw=black}]
            \path[dashed] (a) edge node {} (v);
            \path[very thick] (v) edge node {} (w);
            \path[very thick] (w) edge node {} (b);
            \path[very thick] (c) edge node {} (x);
            \path[very thick] (x) edge node {} (y);
            \path[dashed] (y) edge node {} (d);
            \path[very thick] (v) edge[bend left=35] node[label={[label distance=0]90:e}] {} (y);
            \end{scope}
            \end{tikzpicture}
        \caption{$b$ and $c$ are distinct}
        \label{case2a}
        \end{figure}
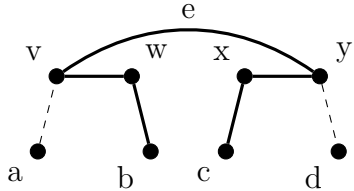
        
        \item \textbf{Case 2b. } The vertices $b$ and $c$ coincide, as shown in Figure \ref{case2b}.
        
        Since every vertex of the graph has degree at least 3, $x$ is adjacent to some vertex not in the set $\{b, y\}$.
    
        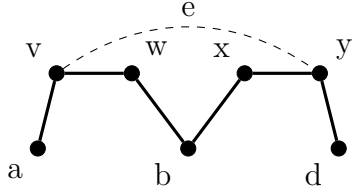
\begin{figure}[ht]
        \centering
            \begin{tikzpicture}[scale=0.5]
            \begin{scope}[every node/.style={circle, fill=black, draw, inner sep=0pt,
            minimum size = 0.2cm
            }]
                \node[label={[label distance=5]225:a}] (a) at (0,0) {};
                \node[label={[label distance=5]225:b}] (b) at (4,0) {};
                \node[label={[label distance=5]225:d}] (d) at (8,0) {};
                \node[label={[label distance=5]135:v}] (v) at (0.5,2) {};
                \node[label={[label distance=5]45:w}] (w) at (2.5,2) {};
                \node[label={[label distance=5]135:x}] (x) at (5.5,2) {};
                \node[label={[label distance=5]45:y}] (y) at (7.5,2) {};
            \end{scope}
            \begin{scope}[every edge/.style={draw=black}]
            \path[very thick] (a) edge node {} (v);
            \path[very thick] (v) edge node {} (w);
            \path[very thick] (w) edge node {} (b);
            \path[very thick] (b) edge node {} (x);
            \path[very thick] (x) edge node {} (y);
            \path[very thick] (y) edge node {} (d);
            \path[dashed] (v) edge[bend left=35] node[label={[label distance=0]90:e}] {} (y);
            \end{scope}
            \end{tikzpicture}
        \caption{$b$ and $c$ coincide}
        \label{case2b}
        \end{figure}
        
        \begin{description}
        \item \textbf{Case 2b.I. } $x$ is adjacent to an internal vertex of $P'$.
        
        If $x$ is adjacent to $v$, construct the ear $S$ with endpoints $b$ and $d$ and edges $bw$, $wv$, $vx$, $xy$ and $yd$ (as shown by the thick path in Figure \ref{case2bi-a}). Otherwise, if $x$ is adjacent to $w$, construct the ear $S$ with endpoints $a$ and $b$ and edges $av$, $vy$, $yx$, $xw$ and $wb$ (as shown by the thick path in Figure \ref{case2bi-b}). In either case, delete $P'$ and $P''$ from $D$, add $S$ to $D$ in place of $P'$, and add all of the remaining edges (dashed edges in the corresponding figure) in trivial ears at the end of $D$.
        
        In both cases, the ear $S$ is an odd long ear with distinct end points, hence the new ear-decomposition is open and is valid for $G$, and the number of even ears remains equal to $\phi(G)$.
        
        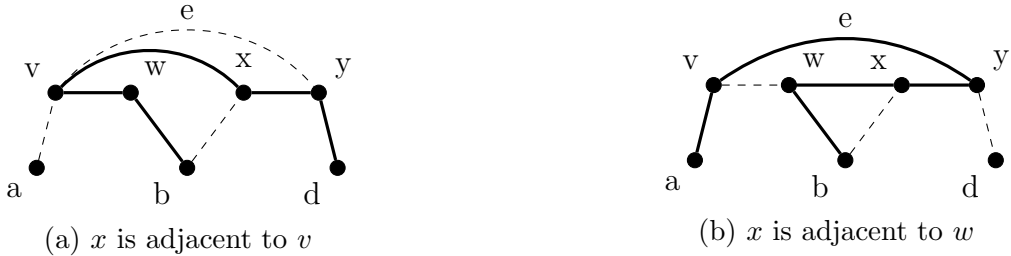
\begin{figure}[ht]
        \begin{subfigure}{0.47\textwidth}
            \centering
            \begin{tikzpicture}[scale=0.5]
            \begin{scope}[every node/.style={circle, fill=black, draw, inner sep=0pt,
            minimum size = 0.2cm
            }]
                \node[label={[label distance=5]225:a}] (a) at (0,0) {};
                \node[label={[label distance=5]225:b}] (b) at (4,0) {};
                \node[label={[label distance=5]225:d}] (d) at (8,0) {};
                \node[label={[label distance=5]135:v}] (v) at (0.5,2) {};
                \node[label={[label distance=5]45:w}] (w) at (2.5,2) {};
                \node[label={[label distance=5]90:x}] (x) at (5.5,2) {};
                \node[label={[label distance=5]45:y}] (y) at (7.5,2) {};
            \end{scope}
            \begin{scope}[every edge/.style={draw=black}]
            \path[dashed] (a) edge node {} (v);
            \path[very thick] (v) edge node {} (w);
            \path[very thick] (w) edge node {} (b);
            \path[dashed] (b) edge node {} (x);
            \path[very thick] (x) edge node {} (y);
            \path[very thick] (y) edge node {} (d);
            \path[dashed] (v) edge[bend left=50] node[label={[label distance=0]90:e}] {} (y);
            \path[very thick] (v) edge[bend left=45] node {} (x);
            \end{scope}
            \end{tikzpicture}
        \caption{$x$ is adjacent to $v$}
        \label{case2bi-a}
        \end{subfigure}
        \hspace*{\fill}
        \begin{subfigure}{0.47\textwidth}
            \centering
            \begin{tikzpicture}[scale=0.5]
            \begin{scope}[every node/.style={circle, fill=black, draw, inner sep=0pt,
            minimum size = 0.2cm
            }]
                \node[label={[label distance=5]225:a}] (a) at (0,0) {};
                \node[label={[label distance=5]225:b}] (b) at (4,0) {};
                \node[label={[label distance=5]225:d}] (d) at (8,0) {};
                \node[label={[label distance=5]135:v}] (v) at (0.5,2) {};
                \node[label={[label distance=5]45:w}] (w) at (2.5,2) {};
                \node[label={[label distance=5]135:x}] (x) at (5.5,2) {};
                \node[label={[label distance=5]45:y}] (y) at (7.5,2) {};
            \end{scope}
            \begin{scope}[every edge/.style={draw=black}]
            \path[very thick] (a) edge node {} (v);
            \path[dashed] (v) edge node {} (w);
            \path[very thick] (w) edge node {} (b);
            \path[dashed] (b) edge node {} (x);
            \path[very thick] (x) edge node {} (y);
            \path[dashed] (y) edge node {} (d);
            \path[very thick] (v) edge[bend left=35] node[label={[label distance=0]90:e}] {} (y);
            \path[very thick] (w) edge node {} (x);
            \end{scope}
            \end{tikzpicture}
        \caption{$x$ is adjacent to $w$}
        \label{case2bi-b}
        \end{subfigure}
        \caption{$x$ is adjacent to an internal vertex of $P'$}
        \label{case2bi}
        \end{figure}
        
        \item \textbf{Case 2b.II. } $x$ is adjacent to an internal vertex $z$ of an ear $R$ not equal to $P'$.
        
        Since the input graph is simple and does not have parallel edges, $z$ does not coincide with $b$ or $y$. If $R$ is a long ear, construct the ear $S$ with endpoints $b$ and $z$ and edges $bw$, $wv$, $vy$, $yx$ and $xz$ (as shown by the thick path in Figure \ref{case2bii-a}). Delete $P'$ and $P''$ from $D$, add $S$ to $D$ in place of $P'$, and add all of the dashed edges in the corresponding figure in trivial ears at the end of $D$.
        
        Otherwise, if $R$ is a short ear, then it is pendant. If it is a 2-ear (Figure \ref{case2bii-b}), construct the ear $S$ as shown by the thick path in the figure. Observe that we have two choices for one end of this ear: we choose to end the ear at either $g$ or $h$, so as to ensure that it is an open ear. The example in the figure shows $S$ ending at $g$, with edges $gz$, $zx$, $xy$, $yv$, $vw$ and $wb$. Remove $P'$, $P''$ and $R$ from $D$, add $S$ to $D$ in place of $P'$, and add all of the dashed edges in trivial ears at the end of $D$.
        
        Otherwise, $R$ is a 3-ear. Let $g$ and $h$ be the endpoints of $R$, and $i$ and $z$ be its internal vertices adjacent to $g$ and $h$ respectively (Figures \ref{case2bii-c} and \ref{case2bii-d}). We have two cases: either $g$ and $b$ are distinct, or they coincide. If they are distinct, construct the ear $S$ as shown by the thick path in Figure \ref{case2bii-c}, with edges $bw$, $wv$, $vy$, $yx$, $xz$, $zi$ and $ig$. Delete $P'$, $P''$ and $R$ from $D$, and add $S$ to $D$ in place of $P'$, and all of the dashed edges in trivial ears at the end of $D$.
        
        Otherwise, the vertices $g$ and $b$ coincide. Construct the ear $S$ as shown by the thick path in Figure \ref{case2bii-d}, with edges $gi$, $iz$, $zx$, $xy$ and $yd$. Delete the ears $P''$ and $R$ from $D$ and add $S$ to $D$ in place of $P''$, and all of the dashed edges in trivial ears at the end of $D$.
        
        In all cases, the number of even ears remains equal to $\phi(G)$, since the only case where $S$ is an even ear is when $R$ is a 2-ear. Additionally, $S$ is an open pendant ear, hence the new ear-decomposition is valid for $G$.
        
        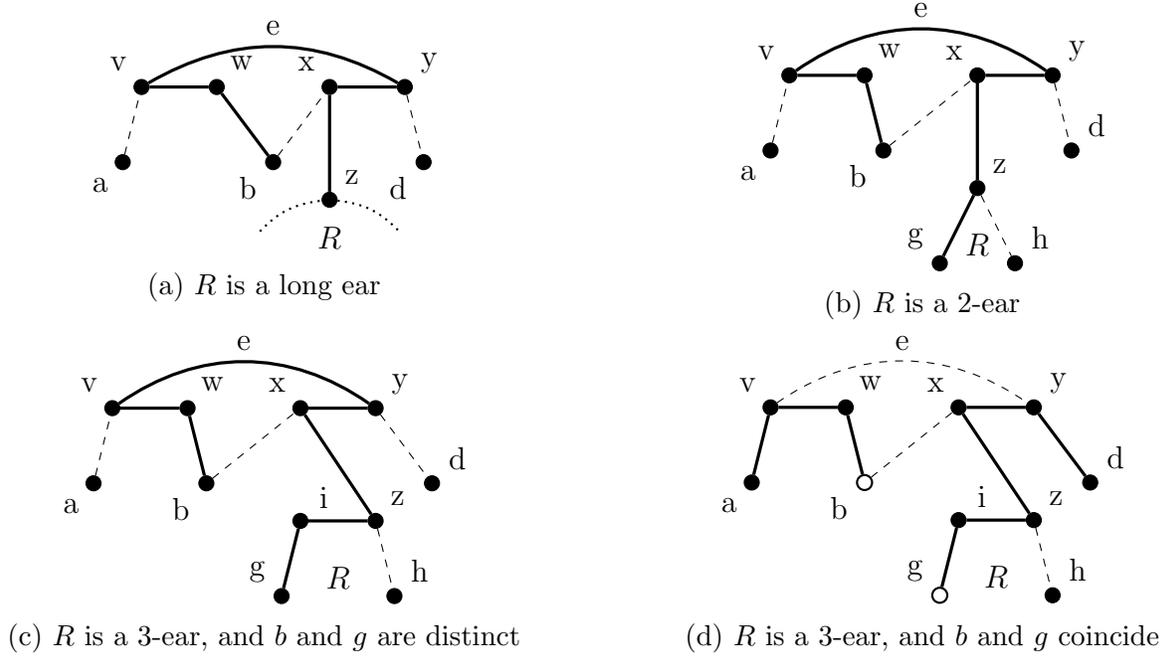
\begin{figure}[ht]
        \begin{subfigure}{0.47\textwidth}
            \centering
            \begin{tikzpicture}[scale=0.5]
            \begin{scope}[every node/.style={circle, fill=black, draw, inner sep=0pt,
            minimum size = 0.2cm
            }]
                \node[label={[label distance=5]225:a}] (a) at (0,0) {};
                \node[label={[label distance=5]225:b}] (b) at (4,0) {};
                \node[label={[label distance=5]225:d}] (d) at (8,0) {};
                \node[label={[label distance=5]135:v}] (v) at (0.5,2) {};
                \node[label={[label distance=5]45:w}] (w) at (2.5,2) {};
                \node[label={[label distance=5]135:x}] (x) at (5.5,2) {};
                \node[label={[label distance=5]45:y}] (y) at (7.5,2) {};
                \node[opacity=0] (left) at (3.5,-2) {};
                \node[label={[label distance=5]45:z},label={[label distance=5]270:$R$}] (mid) at (5.5,-1) {};
                \node[opacity=0] (right) at (7.5,-2) {};
            \end{scope}
            \begin{scope}[every edge/.style={draw=black}]
            \path[dashed] (a) edge node {} (v);
            \path[very thick] (v) edge node {} (w);
            \path[very thick] (w) edge node {} (b);
            \path[dashed] (b) edge node {} (x);
            \path[very thick] (x) edge node {} (y);
            \path[dashed] (y) edge node {} (d);
            \path[very thick] (v) edge[bend left=30]
            node[label={[label distance=0]90:e}] {} (y);
            \path[thick, dotted] (left) edge[bend left=20]
            node {} (mid);
            \path[thick, dotted] (mid) edge[bend left=20]
            node {} (right);
            \path[very thick] (x) edge node {} (mid);
            \end{scope}
            \end{tikzpicture}
        \caption{$R$ is a long ear}
        \label{case2bii-a}
        \end{subfigure}
        \hspace*{\fill}
        \begin{subfigure}{0.47\textwidth}
            \centering
            \begin{tikzpicture}[scale=0.5]
            \begin{scope}[every node/.style={circle, fill=black, draw, inner sep=0pt,
            minimum size = 0.2cm
            }]
                \node[label={[label distance=5]225:a}] (a) at (0,0) {};
                \node[label={[label distance=5]225:b}] (b) at (3,0) {};
                \node[label={[label distance=5]45:d}] (d) at (8,0) {};
                \node[label={[label distance=5]135:v}] (v) at (0.5,2) {};
                \node[label={[label distance=5]45:w}] (w) at (2.5,2) {};
                \node[label={[label distance=5]135:x}] (x) at (5.5,2) {};
                \node[label={[label distance=5]45:y}] (y) at (7.5,2) {};
                \node[label={[label distance=5]45:z}] (z) at (5.5,-1) {};
                \node[label={[label distance=5]135:g}] (g) at (4.5,-3) {};
                \node[label={[label distance=5]45:h}] (h) at (6.5,-3) {};
            \end{scope}
            \begin{scope}[every edge/.style={draw=black}]
            \path[dashed] (a) edge node {} (v);
            \path[very thick] (v) edge node {} (w);
            \path[very thick] (w) edge node {} (b);
            \path[dashed] (b) edge node {} (x);
            \path[very thick] (x) edge node {} (y);
            \path[dashed] (y) edge node {} (d);
            \path[very thick] (v) edge[bend left=35] node[label={[label distance=0]90:e}] {} (y);
            \path[very thick] (x) edge node {} (z);
            \path[very thick] (z) edge node {} (g);
            \path[dashed] (z) edge node {} (h);
            \end{scope}
            \begin{scope}[every node/.style={draw=none,rectangle}]
            \node (r) at (5.5,-2.5) {$R$};
            \end{scope}
            \end{tikzpicture}
        \caption{$R$ is a 2-ear}
        \label{case2bii-b}
        \end{subfigure}
        
        \begin{subfigure}{0.47\textwidth}
            \centering
            \begin{tikzpicture}[scale=0.5]
            \begin{scope}[every node/.style={circle, fill=black, draw, inner sep=0pt,
            minimum size = 0.2cm
            }]
                \node[label={[label distance=5]225:a}] (a) at (0,0) {};
                \node[label={[label distance=5]225:b}] (b) at (3,0) {};
                \node[label={[label distance=5]45:d}] (d) at (9,0) {};
                \node[label={[label distance=5]135:v}] (v) at (0.5,2) {};
                \node[label={[label distance=5]45:w}] (w) at (2.5,2) {};
                \node[label={[label distance=5]135:x}] (x) at (5.5,2) {};
                \node[label={[label distance=5]45:y}] (y) at (7.5,2) {};
                \node[label={[label distance=5]45:z}] (z) at (7.5,-1) {};
                \node[label={[label distance=5]135:g}] (g) at (5,-3) {};
                \node[label={[label distance=5]45:h}] (h) at (8,-3) {};
                \node[label={[label distance=5]45:i}] (i) at (5.5,-1) {};
            \end{scope}
            \begin{scope}[every edge/.style={draw=black}]
            \path[dashed] (a) edge node {} (v);
            \path[very thick] (v) edge node {} (w);
            \path[very thick] (w) edge node {} (b);
            \path[dashed] (b) edge node {} (x);
            \path[very thick] (x) edge node {} (y);
            \path[dashed] (y) edge node {} (d);
            \path[very thick] (v) edge[bend left=35] node[label={[label distance=0]90:e}] {} (y);
            \path[very thick] (x) edge node {} (z);
            \path[very thick] (z) edge node {} (i);
            \path[very thick] (i) edge node {} (g);
            \path[dashed] (z) edge node {} (h);
            \end{scope}
            \begin{scope}[every node/.style={draw=none,rectangle}]
            \node (r) at (6.5,-2.5) {$R$};
            \end{scope}
            \end{tikzpicture}
        \caption{$R$ is a 3-ear, and $b$ and $g$ are distinct}
        \label{case2bii-c}
        \end{subfigure}
        \hspace*{\fill}
        \begin{subfigure}{0.47\textwidth}
            \centering
            \begin{tikzpicture}[scale=0.5]
            \begin{scope}[every node/.style={circle, fill=black, draw, inner sep=0pt,
            minimum size = 0.2cm
            }]
                \node[label={[label distance=5]225:a}] (a) at (0,0) {};
                \node[fill=none, thick, label={[label distance=5]225:b}] (b) at (3,0) {};
                \node[label={[label distance=5]45:d}] (d) at (9,0) {};
                \node[label={[label distance=5]135:v}] (v) at (0.5,2) {};
                \node[label={[label distance=5]45:w}] (w) at (2.5,2) {};
                \node[label={[label distance=5]135:x}] (x) at (5.5,2) {};
                \node[label={[label distance=5]45:y}] (y) at (7.5,2) {};
                \node[label={[label distance=5]45:z}] (z) at (7.5,-1) {};
                \node[fill=none, thick, label={[label distance=5]135:g}] (g) at (5,-3) {};
                \node[label={[label distance=5]45:h}] (h) at (8,-3) {};
                \node[label={[label distance=5]45:i}] (i) at (5.5,-1) {};
            \end{scope}
            \begin{scope}[every edge/.style={draw=black}]
            \path[very thick] (a) edge node {} (v);
            \path[very thick] (v) edge node {} (w);
            \path[very thick] (w) edge node {} (b);
            \path[dashed] (b) edge node {} (x);
            \path[very thick] (x) edge node {} (y);
            \path[very thick] (y) edge node {} (d);
            \path[dashed] (v) edge[bend left=35] node[label={[label distance=0]90:e}] {} (y);
            \path[very thick] (x) edge node {} (z);
            \path[very thick] (z) edge node {} (i);
            \path[very thick] (i) edge node {} (g);
            \path[dashed] (z) edge node {} (h);
            \end{scope}
            \begin{scope}[every node/.style={draw=none,rectangle}]
            \node (r) at (6.5,-2.5) {$R$};
            \end{scope}
            \end{tikzpicture}
        \caption{$R$ is a 3-ear, and $b$ and $g$ coincide}
        \label{case2bii-d}
        \end{subfigure}
        \caption{$x$ is adjacent to a vertex outside $P'$}
        \label{case2bii}
        \end{figure}
        
        \end{description}
    
    \end{description}
    \end{description}
    
    Since the above procedure takes constant time for every pair of pendant ears with adjacent internal vertices, the running time for the whole procedure is polynomial.
    
    On termination of this procedure, the ear-decomposition $D$ has $\phi(G)$ even ears, and is both open and nice.
\end{proof} 

\begin{lem} \label{otherlemma}
Let $D$ be an open nice ear-decomposition of a 2-vertex-connected graph $G$, and $M$ be the associated eardrum composed from the short (pendant) ears of D. Denote by $V_I$ the set of internal vertices of non-pendant ears, and let $\mu(G,M)$ be the size of the maximum earmuff for the eardrum $M$. Then $\mu(G,M) \leq |V_I| - 1$.
\end{lem}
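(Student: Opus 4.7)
The strategy is a vertex-charging argument: for any earmuff $\Pi$ realizing $\mu(G,M)$, I want to construct an injective map $f \colon \Pi \to V_I$ whose image omits at least one vertex of $V_I$, which immediately yields $\mu(G,M) = |\Pi| \le |V_I| - 1$.

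First, I would unpack the Seb\H{o}--Vygen definitions in the setting at hand. The eardrum $M$ here consists of the short pendant ears of $D$, and an earmuff for $M$ is a family of internally vertex-disjoint ears in $G$, internally disjoint from $\mathrm{int}(M)$, that together with $M$ forms an ear-decomposition with the prescribed structural properties. Because $D$ is open, nice, and has all of its short ears pendant, every non-pendant ear of $D$ is long, $V_I$ is the disjoint union of their internal vertex sets, and $V_I$ is disjoint from $\mathrm{int}(M)$ by niceness.

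The core step is to build $f$. For each $E \in \Pi$ I would set $f(E)$ to be a canonical internal vertex of $E$ lying in $V_I$---concretely, the internal vertex of $E$ lying on the non-pendant ear of $D$ appearing earliest in the ear-decomposition order of $D$. The internal vertex-disjointness of the ears of $\Pi$ forces $f$ to be injective, and the niceness of $D$ ensures that the canonical ``earliest ear'' choice is well-defined because the short pendant ears (whose endpoints could otherwise create shortcut attachments between distinct non-pendant ears) do not induce any edge between internal vertices of distinct short ears. To strengthen $\mu(G,M)\le|V_I|$ to $\mu(G,M)\le|V_I|-1$, I would identify a vertex of $V_I$ guaranteed to lie outside $\mathrm{Im}(f)$: a natural candidate is the internal vertex of the first non-pendant long ear $P^{*}$ of $D$ that is adjacent to its earliest endpoint on $V(P_0)$. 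Any earmuff ear anchored at this vertex would have to be assembled before $P^{*}$ itself in the ear-decomposition order, which openness of $D$ forbids.

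The main obstacle is making the anchor map $f$ rigorously injective when a single earmuff ear $E$ traverses the interiors of several non-pendant ears of $D$: the tie-breaking rule must be chosen so that two distinct earmuff ears cannot end up competing for the same anchor. This is exactly where openness (eliminating degenerate closed-ear configurations) and niceness (forbidding adjacencies between internal vertices of distinct short pendant ears) are used nontrivially, and I expect the bulk of the case analysis to live there. Once $f$ is in place, exhibiting the ``slack'' vertex in $V_I$ coming from $P^{*}$ is a clean consequence of the order of $D$ and the definition of earmuff.
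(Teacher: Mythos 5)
There is a genuine gap, and it starts with the definition of an earmuff. In the Seb\H{o}--Vygen setup that this lemma relies on, an earmuff for the eardrum $M$ is a collection of paths $P_f$, one for each $f$ in some subset $M'\subseteq M$, where the \emph{internal} vertices of $P_f$ are exactly the vertices of the eardrum component $f$ (i.e.\ internal vertices of short pendant ears) and only the two \emph{endpoints} of $P_f$ lie outside $\bigcup M$; moreover the union of all these paths is required to be a forest. Your plan hinges on assigning to each earmuff path ``a canonical internal vertex of $E$ lying in $V_I$,'' but no internal vertex of an earmuff path lies in $V_I$ --- they all lie in the eardrum. Likewise, an earmuff path never ``traverses the interiors of several non-pendant ears of $D$''; it touches $V_I$ only at its two endpoints. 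So the anchor map you propose is undefined, and the case analysis you anticipate (tie-breaking among several non-pendant ears met by one earmuff path) concerns a situation that cannot occur. The natural repair --- anchoring each path at one of its two endpoints --- does not give injectivity either, because distinct earmuff paths are allowed to share endpoints; and your argument for the extra ``$-1$'' (that an earmuff path anchored at a certain vertex ``would have to be assembled before $P^{*}$ in the ear-decomposition order'') appeals to an ordering constraint that is not part of the definition of an earmuff.

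The ingredient your proposal never uses is the one that actually does all the work: the requirement that the union of the earmuff paths be a forest. The paper's proof contracts each path $P_f$ to a single edge joining its two endpoints, obtaining an auxiliary graph $H$ on the vertex set $V_I$ with one edge per earmuff path. If there were at least $|V_I|$ paths, $H$ would have at least $|V_I|$ edges on $|V_I|$ vertices and hence contain a cycle, which lifts to a cycle in the union of the earmuff paths, contradicting the forest condition. This gives $\mu(G,M)\le |V_I|-1$ directly, with the ``$-1$'' coming for free from the fact that a forest on $n$ vertices has at most $n-1$ edges --- no distinguished excluded vertex, and no essential use of openness or niceness, is needed. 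I would recommend rereading the definitions of eardrum and earmuff in Section 4 of Seb\H{o}--Vygen before reworking the argument.
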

\begin{proof}
    Suppose not. Then $\mu(G,M) \geq |V_I|$. Consider the graph $H$ on the vertex set $V_I$ with edge $(u,v)$ present in $E(H)$ if and only if there is a path with its endpoints at $u$ and $v$ in the maximum earmuff for $M$. Since $\mu(G,M) \geq |V_I|$, this graph has at least $|V_I|$ edges, and is hence not a forest. Since every edge in this graph corresponds to a path in the maximum earmuff, any cycle in this graph must be a cycle in the maximum earmuff, which contradicts the definition of an earmuff, which states that the union of all paths in the earmuff is a forest. Hence $\mu(G,M) \leq |V_I| - 1$.
\end{proof}

\begin{thm} \label{algorithmtheorem}
    There is a $\frac{17}{12}$-approximation algorithm for the minimum 2-vertex-connected spanning subgraph problem on graphs with minimum degree at least 3. For any 2-vertex-connected graph $G$ where every vertex has degree at least 3, it finds a 2-vertex-connected spanning subgraph with at most $\frac{17}{12}OPT_{2VC}(G)$ edges in polynomial time.
\end{thm}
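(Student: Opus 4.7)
The plan is to chain Theorems \ref{pendanttheorem} and \ref{opennicetheorem}, delete all trivial ears from the resulting ear-decomposition, and then bound the number of edges of the output by combining the two lower bounds $LP(G)$ and $L_\mu(G,M)$ on $OPT_{2VC}(G)$ from Section 4.

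First I would invoke Theorem \ref{pendanttheorem} to build, in polynomial time, an open evenmin ear-decomposition $D_0$ of $G$ in which every short ear is pendant. Then I would apply Theorem \ref{opennicetheorem} to $D_0$ to obtain an open evenmin \emph{nice} ear-decomposition $D$ that still has the property that all short ears are pendant. Let $H$ be the spanning subgraph of $G$ obtained by deleting every edge that belongs to a trivial ear of $D$. A standard property of open ear-decompositions --- each non-trivial open ear furnishes two internally-disjoint paths between its endpoints on top of the subgraph already constructed --- implies that $H$ is 2-vertex-connected. Hence $H$ is a valid 2VC spanning subgraph of $G$, completing Steps 1--4 of the algorithm in polynomial time.

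The main task is then to show $|E(H)| \leq \tfrac{17}{12}\, OPT_{2VC}(G)$. Let $M$ be the eardrum consisting of the short (pendant) ears of $D$, and let $V_I$ denote the set of internal vertices of the non-pendant ears of $D$. I would start from $|E(H)| = \sum_{P \text{ nontrivial}} \ell_P$ and, using the identity $\ell_P = (\ell_P - 1) + 1$, split this sum into a contribution counting all internal vertices of nontrivial ears and a contribution counting the nontrivial ears themselves. Decomposing further by whether an ear is pendant or non-pendant and by whether it is short or long, and invoking: (i) the evenmin property, so that the parity contribution is exactly $\phi(G)$; (ii) the niceness of $D$ together with the fact that every short ear is pendant, so that $M$ is a genuine eardrum whose internal vertices are not adjacent across short ears; and (iii) Lemma \ref{otherlemma}, which gives $\mu(G,M) \leq |V_I| - 1$; I would rewrite $|E(H)|$ as a linear combination of quantities each of which is controlled by either $LP(G)$ or $L_\mu(G,M)$.

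The final step, and the main obstacle, is to take the convex combination of the inequalities $LP(G) \leq OPT_{2VC}(G)$ and $L_\mu(G,M) \leq OPT_{2VC}(G)$ with the same coefficients used in the 2-edge-connected analysis of Cheriyan, Seb\H{o} and Szigeti \cite{cheriyanseboszigeti} and verify that the resulting upper bound on $|E(H)|$ simplifies to $\tfrac{17}{12}\, OPT_{2VC}(G)$. The reason the earlier computation transfers to the 2-vertex-connected setting is precisely that the structural hypotheses produced by Theorems \ref{pendanttheorem} and \ref{opennicetheorem}, combined with Lemma \ref{otherlemma}, reproduce the same inequalities on $|V_I|$, on the number of pendant and short ears, and on $\phi(G)$ that drive the $\tfrac{17}{12}$ bound in the 2-edge-connected case. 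The arithmetic is routine; the conceptual work --- already done in the preceding theorems and lemma --- was to arrange the ear-decomposition so that this analysis becomes available for 2-vertex-connectivity.
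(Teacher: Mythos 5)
Your proposal is correct and follows essentially the same route as the paper: chain Theorems \ref{pendanttheorem} and \ref{opennicetheorem}, delete the trivial ears (2-vertex-connectivity of the remainder following from the open ear-decomposition), and bound the surviving edges via per-ear counts, the eardrum $M$, Lemma \ref{otherlemma}, and the lower bounds $LP(G)$ and $L_\mu(G,M)$ exactly as in Cheriyan--Seb\H{o}--Szigeti. The only cosmetic difference is that the paper packages the computation as two explicit upper bounds, $\frac{5}{4}LP(G)+\frac{1}{2}\pi$ and $\frac{3}{2}LP(G)-\frac{1}{4}\pi$, and takes the better one according to whether $\pi\leq\frac{1}{3}LP(G)$, which is equivalent to the $\frac{1}{3}$--$\frac{2}{3}$ convex combination you describe.
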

\begin{proof}
    Construct an open evenmin nice ear-decomposition $D$ for $G$. Let $\pi$ denote the number of pendant ears and $\pi_3$ the number of (pendant) 3-ears in this ear-decomposition. We have $\pi_3 \leq \pi$. Let $H$ be the graph obtained by deleting from $G$ all edges that are in trivial ears in this ear-decomposition. Since the nontrivial ears of $D$ form an open ear-decomposition for $H$, $H$ is 2-vertex-connected (Whitney \cite{whitney}), and has at most $\frac{17}{12}$LP(G) edges, which we show using the following claims.
    
    \begin{inclm}
    The number of edges in nontrivial ears is at most $\frac{5}{4}LP(G) + \frac{1}{2}\pi$.
    \end{inclm}
    \begin{inproof}
    For any ear $P$ with $|E(P)| \geq 5$, we have $|E(P)| \leq \frac{5}{4}|in(P)|$. For any 4-ear or 2-ear $P$ we have $|E(P)| \leq \frac{5}{4}|in(P)| + \frac{3}{4}$. For any 3-ear $P$ we have $|E(P)| \leq \frac{5}{4}|in(P)| + \frac{1}{2}$.
    
    Let $E'$ be the set of edges in nontrivial ears. Since the total number of 4- and 2-ears in $D$ is at most $\phi(G)$, and $\pi_3 \leq \pi$, the total number of edges in nontrivial ears is at most $\frac{5}{4}(|V(G)| - 1) + \frac{3}{4}\phi(G) + \frac{1}{2}\pi \leq \frac{5}{4}L_\phi(G) + \frac{1}{2}\pi$, which is at most $\frac{5}{4}LP(G) + \frac{1}{2}\pi$.
    \end{inproof}
    
    \begin{inclm}
    The number of edges in nontrivial ears is at most $\frac{3}{2}LP(G) - \frac{1}{4}\pi$.
    \end{inclm}
    \begin{inproof}
    Since $D$ is an open nice ear-decomposition, the graph induced in $G$ by the internal vertices $V_M$ of the pendant short ears of $D$ has degree at most 1. Let $M$ be the set of its components, then $M$ is an eardrum in $G$. Let $V_D$ be the set of internal vertices of pendant long ears and let $V_I = V \setminus (V_M\cup V_D)$. Denote by $\phi_M$, $\phi_D$ and $\phi_I$ the number of even ears in the sets of pendant short ears, pendant long ears and non-pendant ears respectively.
    
    Let $E_1$ be the set of edges in pendant short ears. For every pendant short ear $P$, we have $|E(P)| = \frac{3}{2}|in(P)| + \frac{1}{2}\phi(P)$. Summing over all pendant short ears, we have $|E_1| = \frac{3}{2}|V_M| + \frac{1}{2}\phi_M$.
    
    Let $E_2$ be the set of edges in pendant long ears. For every pendant long ear $P$, we have $|E(P)| \leq \frac{3}{2}|in(P)| + \frac{1}{2}\phi(P) - 1$. Summing over all pendant long ears, we have $|E_2| \leq \frac{3}{2}|V_D| + \frac{1}{2}\phi_D - (\pi - |M|)$.
    
    Let $E_3$ be the set of edges in non-pendant ears. For every non-pendant ear $P$ except the single vertex ear $P_0$, since $P$ is a long ear, we have $|E(P)| \leq \frac{5}{4}|in(P)| + \frac{1}{2}\phi(P)$. For the vertex ear $P_0$, $|E(P_0)| = 0$ and $|in(P_0)| = 1$. Summing over all non-pendant ears including $P_0$, we have $|E_3| \leq \frac{5}{4}|V_I - 1| + \frac{1}{2}\phi_I$.

\allowdisplaybreaks[1]    
    Let $E' = E_1 \cup E_2 \cup E_3$ be the set of edges in nontrivial ears. Summing over the above inequalities, we get
    \begin{alignat*}{4}
    &|E'| && \quad \leq && \quad && \frac{3}{2}|V(G)| + \frac{1}{2}\phi(G) - \pi + |M| - \frac{1}{4}|V_I| - \frac{5}{4} \\
    &&& \quad = && \quad && \left[ |V(G)| + |M| - \mu(G,M) - 1 \right] \\
    &&& \quad && +\quad && \frac{1}{2}\left[ |V(G)| + \phi(G) - 1 \right] \\
    &&& \quad && -\quad && \pi \\
    &&& \quad && +\quad && \left( \frac{1}{4} + \mu(G,M) -\frac{1}{4}|V_I| \right) \\
    &&& \quad = && \quad && L_\mu(G,M) + \frac{1}{2} L_\phi(G) - \pi + \left( \frac{1}{4} + \mu(G,M) -\frac{1}{4}|V_I| \right) \\
    &&& \quad = && \quad && \frac{3}{2} LP(G) - \pi + \left( \frac{1}{4} + \mu(G,M) -\frac{1}{4}|V_I| \right) \\
    &&& \quad \leq && \quad && \frac{3}{2} LP(G) - \pi + \frac{3}{4}\mu(G,M) \qquad \text{using Lemma \ref{otherlemma}} \\
    &&& \quad \leq && \quad && \frac{3}{2} LP(G) - \pi + \frac{3}{4}\pi \qquad \text{since $\mu(G,M) \leq |M| \leq \pi$} \\
    &&& \quad \leq && \quad && \frac{3}{2} LP(G) - \frac{1}{4}\pi. \quad
    \end{alignat*}
    \end{inproof}
    (\textit{Proof of Theorem 3 continued})
    
    If $\pi \leq \frac{1}{3}LP(G)$, then from Claim 1, $|E'| \leq \frac{5}{4}LP(G) + \frac{1}{2}\pi \leq \frac{17}{12}LP(G) \leq \frac{17}{12}OPT_{2VC}(G)$.
    
    If $\pi > \frac{1}{3}LP(G)$, then from Claim 2, $|E'| \leq \frac{3}{2}LP(G) - \frac{1}{4}\pi < \frac{17}{12}LP(G) \leq \frac{17}{12}OPT_{2VC}(G)$.

    Applying Theorems \ref{pendanttheorem} and \ref{opennicetheorem} to $G$, and deleting all edges in trivial ears, we obtain a 2-vertex-connected spanning subgraph of cardinality at most $\frac{17}{12}OPT_{2VC}(G)$ in polynomial time.
\end{proof}

\clearpage

\clearpage
\section*{Appendix A}

\textbf{A1.} The 2-vertex-connectivity problem is NP-hard when restricted to graphs with minimum degree at least 3.
\begin{proof}
Let $G$ be an input graph to the general 2-vertex-connectivity problem, and denote by $n(G)$ the number of vertices with degree 2 in $G$.

Consider the graph $G'$ constructed as follows: replace every vertex with degree 2 in $G$ by an instance of $K_4$ (the complete graph on 4 vertices), such that the two edges incident on the degree-2 vertex in $G$ are incident on distinct vertices of the $K_4$ instance in $G'$. Then $G'$ has minimum degree at least 3, and every 2-vertex-connected spanning subgraph $H$ of $G$, with $|E(H)|$ edges, corresponds to a 2-vertex-connected spanning subgraph $H'$ of $G'$ (constructed by adding a path of length 3 between the degree-4 nodes of every $K_4$ instance created by replacement), with $|E(H')| = |E(H)| + 3n(G)$ edges, and vice-versa.

Hence any algorithm that solves the 2-vertex-connectivity problem in polynomial time on graphs with minimum degree at least 3 can be used to solve the unrestricted problem in polynomial time, which implies the statement of A1.
\end{proof}

\end{document}